\theoremstyle{plain}
\newtheorem{thm}{Theorem}
\newtheorem{lm}{Lemma}
\newtheorem{pro}{Proposition}
\theoremstyle{definition}
\newtheorem{defi}{Definition}
\theoremstyle{remark}
\newtheorem*{rmk}{Remark}
\def\mb{\mathbf}
\def\bs{\boldsymbol}
\def\t{\text}
\begin{document}
\title{Asynchronous Distributed Downlink Beamforming and Power Control in Multi-cell Networks }
\author{ 
Siduo~Shen,~\IEEEmembership{Student Member,~IEEE,}
and~Tat-Ming~Lok,~\IEEEmembership{Senior Member,~IEEE,}
\thanks{S. Shen and T. M. Lok are with the Department of Information Engineering, The Chinese University of Hong Kong, Shatin, N.T., Hong Kong. E-mail: \{ssd010, tmlok\}@ie.cuhk.edu.hk.

This work has been submitted to the IEEE for possible publication.  Copyright may be transferred without notice, after which this version may no longer be accessible.
}
}
\maketitle
\begin{abstract}

In this paper, we consider a multi-cell network where every base station (BS) serves multiple users with an antenna array. 
Each user is associated with only one BS and has a single antenna. 
Assume that only long-term channel state information (CSI) is available in the system.
The objective is to minimize the network downlink transmission power needed to meet the users' signal-to-interference-plus-noise ratio (SINR) requirements.
For this objective, we propose an asynchronous distributed beamforming and power control algorithm which provides the same optimal solution as given by centralized algorithms.
To design the algorithm, the power minimization problem is formulated mathematically as a non-convex problem.
For distributed implementation, the non-convex problem is cast into the dual decomposition framework.
Resorting to the theory about matrix pencil, a novel asynchronous iterative method is proposed for solving the dual of the non-convex problem.
The methods for beamforming and power control are obtained by investigating the primal problem.
At last, simulation results are provided to demonstrate the convergence and performance of the algorithm.

\end{abstract}
\begin{IEEEkeywords}
Beamforming, multi-cell networks, dual decomposition, matrix pencil, asynchronous distributed algorithm
\end{IEEEkeywords}

\section{Introduction}

Downlink beamforming within a cell is an efficient technique for balancing the intra-cell interference, which has been studied since 1998 \cite{rashid-farrokhi_transmit_1998, visotsky_optimum_1999, bengtsson_optimal_1999,schubert_solution_2004,wiesel_linear_2006,yu_transmitter_2007}.
However, cell edge users may still suffer inter-cell interference in a multi-cell network.
By coordinating beamforming and power control among BSs, inter-cell interference can be mitigated \cite{bengtsson_optimal_2001}.
This kind of beamforming is called multi-cell beamforming.
Multi-cell beamforming has its applications, like interference management in femto-cell/small-cell networks, or spatial spectrum sharing in cognitive radio networks.

One of the basic beamforming problems is to minimize the total transmission power subject to each user's SINR constraint which is named the power minimization beamforming problem (PMBP). 
The problem differs for different kinds of CSI \cite{bengtsson_optimal_2001}. 
The PMBP is a convex problem \cite{wiesel_linear_2006,yu_transmitter_2007} with the instantaneous CSI which appears as a vector.
However, the PMBP is non-convex \cite{bengtsson_optimal_1999} with the long-term CSI. The long-term CSI appears as a spatial correlation matrix whose rank is greater than one.

Centralized algorithms in \cite{rashid-farrokhi_transmit_1998} and \cite{bengtsson_optimal_1999} are derived for single-cell PMBP with either instantaneous CSI or long-term CSI. Their generalizations are applied to the multi-cell PMBP and provide the optimal solution. However, their distributed implementations are unavailable.
The main reason is that they need strict coordination and synchronization among BSs which is impractical with nowadays backhaul network.

In this work, we assume the long-term CSI.
Our main contribution is an algorithm that can be implemented asynchronously in a distributed fashion. 
The algorithm achieves the same optimal solution as given by the centralized algorithms. 
In the algorithm, a BS updates the beamforming vectors of its users. 
A BS only communicates with its neighboring BSs for coordination via backhaul networks.
Motivated by the concept of ``power on demand'' \cite{lin2012distributed}, we let each user compute its required power and send it to its BS. The BSs allocate the users the required power.
The term ``asynchronous'' has two folds of meanings. First, neither the BSs nor the users need to update simultaneously, or with the same frequency. This property makes the algorithm suitable for distributed implementation.
Second, the algorithm converges with outdated coordination information. This implies that the algorithm is robust to the transmission delay and the packet loss of the backhaul network.

To design the algorithm, the PMBP is cast into the dual decomposition framework \cite{bertsekas1989parallel}. 
A novel asynchronous iterative method is proposed for solving the dual of the PMBP, which relies on the recent progress about the semidefiniteness of a matrix pencil \cite{stewart1979pertubation,kovavc1995trace,Liang20133085}. 
With the optimal dual variables, a user's beamforming vector is the null space of a linear combination of matrices which is computed by its BS.
The power control algorithm falls in the classical iterative power control framework in \cite{yates1995framework}. Thus, the users can update the power asynchronously with local information.
Besides, we also investigate the feasibility of the multi-cell PMBP with long-term CSI. The algorithm is derived assuming the PMBP being feasible. 

\subsection{Related Work}

In the literature about PMBP, uplink-downlink duality is a major aspect \cite{rashid-farrokhi_transmit_1998,visotsky_optimum_1999,schubert_solution_2004}.
The dual of the single-cell PMBP has a similar form to the uplink beamforming problem. 
Based on the dual of the PMBP, the virtual uplink problem is defined.
 The virtual uplink problem can be solved with a beamformer-power update method. 
The method is extended to the multi-cell case in \cite{botella2008coordination} by viewing multiple BSs as an antenna array. 
The communication requirement may be difficult to meet in practice.
Besides, method with limited coordination among BSs is proposed in \cite{huang_distributed_2011}. 

The duality of the single-cell PMBP with instantaneous CSI is also studied with the Lagrangian theory in \cite{wiesel_linear_2006},\cite{yu_transmitter_2007}.
An algorithm that is different from the algorithm in \cite{rashid-farrokhi_transmit_1998} is proposed. 
The work of \cite{dahrouj_coordinated_2010} is an extension of \cite{yu_transmitter_2007} to the multi-cell case. 
An efficient distributed multi-cell beamforming algorithm is proposed.
In \cite{dahrouj_coordinated_2010}, the system is time division duplex, i.e., reciprocal channel. Besides, instantaneous CSI is assumed, and the noise variances at each receiver are the same. In this case, the virtual uplink is the real uplink \cite{schubert_solution_2004}, which is convenient for distributed implementation.
However, the algorithm is not amenable to the multi-cell PMBP with long-term CSI.
Our work complements the work of \cite{dahrouj_coordinated_2010} by assuming long-term CSI, non-reciprocal channel, and different noise variances at each user.

Another class of methods is based on the convex optimization techniques \cite{bengtsson_optimal_1999, gershman_convex_2010}.
The PMBP is relaxed to a convex semidefinite program (SDP). In \cite{bengtsson_optimal_1999}, the relaxation gap is proved to be zero if the PMBP is feasible, which implies strong duality.
This conclusion is generalized to the multi-cell case in \cite{bengtsson_optimal_2001}.
With this approach, additional constraints can be considered in problem formulation \cite{vucic_robust_2009,huang_dual_2010}. However, this method is hard to be implemented distributively.

Other related works are summarized as follows.
In \cite{dartmann_duality_2013}, the surrogate duality approach is used to maximize the minimum SINR under power constraint.
In \cite{Xiang2013coordinated}, power minimization of multi-cell multi-cast beamforming system is studied.
In \cite{cai2012maxmin}, max-min weighted SINR problem subject to weighted sum power constraint is studied, where the problem is decouple into multiple sub-problems by introducing a set of slack variables.
In \cite{jeong2011beamforming}, \cite{noh_beamforming_2013},
beamforming techniques are applied to hierarchical systems for interference reduction.
Besides, there are works with other design objectives, such as transmission rate \cite{zhang_cooperative_2010}, energy-efficiency \cite{lim_energy-efficient_2013}, and robustness\cite{wang_robust_2013}.

\subsection{Organization}
The rest of this paper is organized as follows. In Section \ref{sec:System Model and Problem Formulation}, the system model and the problem formulation are given. In Section \ref{sec:Problem Feasibility}, the feasibility of the PMBP is studied.
Then, in Section \ref{sec:Computational Solution for the MP}, we introduce the computation methods for the solution of the PMBP. In Section \ref{sec:Distributed Beamforming and Power Control}, we show how to implement the computation methods in an asynchronous distributed manner.
The performance is further demonstrated in Section \ref{sec:Simulation} via simulation.

\emph{Notation:} Let $\mb{A}$ be a matrix.
$\mb{A}^{-1}$, $\mb{A}^{H}$ and $\mb{A}^{T}$ represent the inverse of $\mb{A}$, the Hermitian transpose of $\mb{A}$ and the transpose of $\mb{A}$, respectively.
Besides, 
$\text{det}(\mb{A})$,
$\text{rank}(\mb{A})$,
$\text{null}(\mb{A})$, and 
$\rho(\mb{A})$
stand for the determinant of $\mb{A}$, 
the rank of $\mb{A}$,
the null space of $\mb{A}$,
and the radius of $\mb{A}$, respectively.
If $\mb{A}\succeq / \succ 0$, $\mb{A}$ is a positive semidefinite$/$positive definite matrix.
$\mb{I}$ denotes an identity matrix with adaptive size.
$\mathbb{E}[\cdot]$ denotes the statistical expectation. 
When describing vectors' relation, ``$\le$'', ``$\ge$'', ``$<$'', ``$>$'' are component-wise.

\section{System Model and Problem Formulation}
\label{sec:System Model and Problem Formulation}
We consider a multi-cell network with $M$ cells indexed by $m\in\mathcal{M} =\{1,...,M\}$. Each cell has a BS with $N$ antennas and $K$ users. Within a cell, a user is indexed by $i\in\mathcal{I} = \{1,...,K\}$. In the system scope, a user is identified by its cell index and its intra-cell index, i.e., $(m,i)$. 
User $(m,i)$ indicates the user $i$ in cell $m$. For later use, ``$\forall (m,i)$'' represents ``for all users''.

The cells work in the same spectrum. 
The channel gain from BS $m$ to user $(n,i)$ is denoted with $\bs{h}_{m,n,i}\in\mathbb{C}^{N\times 1}$, $m,n\in\mathcal{M},~i\in\mathcal{I}$. 
We assume flat-fading channels, i.e., no inter-symbol-interference.
The long-term CSI is given as the spatial correlation matrices, 
$$
	\mb{R}_{m,n,i} = \mathbb{E}[\bs{h}_{m,n,i}\bs{h}_{m,n,i}^{H}],~\forall m,n\in\mathcal{M},~\forall i\in\mathcal{I},
$$
which satisfy $\text{rank}(\mb{R}_{m,n,i}) > 1$ and $\mb{R}_{m,n,i}\succeq 0$.
The long-term CSI assumption implies that the mean SINR is considered. 

Let $s_{m,i}\in\mathbb{C}$ denotes the information signal transmitted from BS $m$, $m\in\mathcal{M}$, to its user $i$, $i\in\mathcal{I}$.
Let $\bs{w}_{m,i}\in\mathbb{C}^{N\times 1}$ be the beamforming vector for $s_{m,i}$.
In the remaining of this work, we have $\|\bs{w}_{m,i}\| = 1$ as default.
The received signal of user $(m,i)$ is
\begin{align}
	y_{m,i} &= \bs{h}_{m,m,i}^{H}\bs{w}_{m,i}s_{m,i} 
	+ \sum_{j \neq i}\bs{h}_{m,m,i}^{H}\bs{w}_{m,j}s_{m,j} \notag \\
	&+ \sum_{n\neq m,~j}\bs{h}_{n,m,i}^{H}\bs{w}_{n,j}s_{n,j}
	+ z_{m,i}, \notag 
\end{align}
where $z_{m,i}$ is the additive white circularly symmetric Gaussian complex noise with variance $\sigma_{m,i}^{2}$. 
To simplify the notations, as \cite{dahrouj_coordinated_2010} does, we have $\sum_{j\neq i} = \sum_{\forall j\in\mathcal{I},j\neq i}$ and $\sum_{n\neq m,~j} = \sum_{\forall n\in\mathcal{M},n\neq m}\sum_{\forall j\in\mathcal{I}}$. 
Similar notation schemes are used in the rest of this work.

The SINR of the user $(m,i)$ is defined as follows,
\begin{align}
\textrm{sinr}_{m,i} = 
\frac{ 
p_{m,i}\bs{w}_{m,i}^{H}\mb{R}_{m,m,i}\bs{w}_{m,i}
}
{
\sum_{(n,j) \neq (m,i)}p_{n,j}\bs{w}_{n,j}^{H}\mb{R}_{n,m,i}\bs{w}_{n,j} 
+ \sigma_{m,i}^{2}
}, \notag
\end{align}
where $p_{m,i} = \mathbb{E}[\|s_{m,i}\|^{2}]$ is the allocated power of BS $m$ for user $(m,i)$ and $\bs{w}_{m,i}^{H}\mb{R}_{m,n,j}\bs{w}_{m,i} = \mathbb{E}[\|\bs{h}_{m,n,j}^{H}\bs{w}_{m,i}\|^{2}]$.
Besides, the notation $\sum_{(n,j) \neq (m,i)}$ represents summation over all the users except user $(m,i)$. 
In addition, we denote by $\gamma_{m,i}$ the SINR target of user $(m,i)$.

The objective is to minimize the total transmission power subject to each user's SINR constraint.
Mathematically, the objective is to find the solution of the following \emph{master problem},
\begin{align}
\label{master problem}
\underset{\bs{p}\ge 0,\mb{\Upsilon}}{\text{min.}}~~\sum_{m,i} p_{m,i} 
~~~~\text{s.t.}~~\textrm{sinr}_{m,i} \ge \gamma_{m,i},~\forall (m,i),
\end{align}
where we define $\bs{p} = [p_{1,1},p_{1,2},...,p_{M,K}]^{T}$ and $\mb{\Upsilon} = [\bs{w}_{1,1},\bs{w}_{1,2},...,\bs{w}_{M,K}]$. 
Since the constraints, $\|\bs{w}_{m,i}\| = 1,~\forall (m,i),$ are default, we omit them in this and the following optimization problems.

\section{Problem Feasibility}
\label{sec:Problem Feasibility}

The master problem \eqref{master problem} is a mathematical formulation of the multi-cell PMBP. 
If the master problem is feasible, all users' SINR targets can be achieved.
However, the master problem is not feasible for all sets of SINR targets. 
In this section, we study the feasibility of the master problem which can be viewed as a multi-cell extension of the work in \cite{shendesign}.

The master problem is feasible if its feasible region is non-empty.
The feasible region is characterized by $\textrm{sinr}_{m,i} \ge \gamma_{m,i},~\forall (m,i)$, which can be written as,
\begin{align}
	\label{sinr equivalent form}
	&p_{m,i} - \gamma_{m,i} \sum_{(n,j)\neq(m,i)}p_{n,j}\frac{\bs{w}_{n,j}^{H}\mb{R}_{n,m,i}\bs{w}_{n,j} }{ \bs{w}_{m,i}^{H}\mb{R}_{m,m,i}\bs{w}_{m,i} } \notag \\
	\ge &\frac{\gamma_{m,i}\sigma_{m,i}^{2}}{\bs{w}_{m,i}^{H}\mb{R}_{m,m,i}\bs{w}_{m,i} },~\forall (m,i).
\end{align}
The norm constraints of the beamforming vectors can be guaranteed by normalization, so they do not impact the feasibility of the master problem.

We rewrite the above inequalities in the following compact form,
\begin{align}
\label{feasible region in vector}
	(\mb{I} - \mb{\Gamma}\mb{G})\bs{p}\ge \bs{\eta},
\end{align}
where $\mb{\Gamma}$ is a diagonal matrix, $\bs{\eta}$ is a vector, and $\mb{G}$ is a $MK \times MK$ matrix.
In particular, we have 
$$
\mb{\Gamma} = \text{diag}(\gamma_{1,1},\gamma_{1,2},...,\gamma_{M,K}),
$$ 
the component of $\bs{\eta}$ as
\begin{align}
	\label{define of eta}
	\eta_{m,i} = \frac{\gamma_{m,i}\sigma_{m,i}^{2}}{\bs{w}_{m,i}^{H}\mb{R}_{m,m,i}\bs{w}_{m,i} },
\end{align}
and the component of matrix $\mb{G}$ as
\begin{align}
\label{eq:matrix G}
[\mb{G}]_{(m,i),(n,j)} = \frac{ \bs{w}_{n,j}^{H}\mb{R}_{n,m,i}\bs{w}_{n,j} }
{ \bs{w}_{m,i}^{H}\mb{R}_{m,m,i}\bs{w}_{m,i} }.
\end{align}

The feasibility of the master problem now depends on the inequality \eqref{feasible region in vector}.
Mathematically, according to the Perron-Frobenius theorem \cite{horn1990matrix}, we have the following lemma.
\begin{lm}
\label{feasibility condition}
The master problem is feasible if and only if there exists a set of beamforming vectors such that $\rho(\mb{\Gamma}\mb{G} ) <1$.
\end{lm}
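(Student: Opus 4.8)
The plan is to separate the joint optimization over the power vector $\bs{p}$ and the beamformers $\mb{\Upsilon}$ from the core spectral-radius argument. For a \emph{fixed} choice of beamformers, every entry of $\mb{G}$ in \eqref{eq:matrix G} is a ratio of quadratic forms of the positive-semidefinite matrices $\mb{R}_{n,m,i}$, hence $\mb{G}\ge 0$ entrywise; since $\mb{\Gamma}$ is diagonal with strictly positive diagonal, $\mb{A}:=\mb{\Gamma}\mb{G}$ is a nonnegative matrix, and each $\eta_{m,i}$ in \eqref{define of eta} is strictly positive, so $\bs{\eta}>0$. Thus for fixed $\mb{\Upsilon}$ the feasible region \eqref{feasible region in vector} is exactly the set of $\bs{p}\ge 0$ obeying $(\mb{I}-\mb{A})\bs{p}\ge\bs{\eta}$, and the master problem is feasible iff such a $\bs{p}$ exists for \emph{some} $\mb{\Upsilon}$. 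It therefore suffices to prove the following equivalence for a fixed nonnegative $\mb{A}$ and fixed $\bs{\eta}>0$: the system $(\mb{I}-\mb{A})\bs{p}\ge\bs{\eta}$ admits a solution $\bs{p}\ge 0$ if and only if $\rho(\mb{A})<1$.

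For the ``if'' direction I would use the Neumann series. If $\rho(\mb{A})<1$, then $\mb{I}-\mb{A}$ is invertible with $(\mb{I}-\mb{A})^{-1}=\sum_{k=0}^{\infty}\mb{A}^{k}$, and because $\mb{A}\ge 0$ this series is a nonnegative matrix; hence $\bs{p}^{\star}:=(\mb{I}-\mb{A})^{-1}\bs{\eta}\ge 0$ satisfies $(\mb{I}-\mb{A})\bs{p}^{\star}=\bs{\eta}$ and lies in the feasible region.

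For the ``only if'' direction I would invoke the left Perron eigenvector. Suppose some $\bs{p}\ge 0$ satisfies $(\mb{I}-\mb{A})\bs{p}\ge\bs{\eta}$. Componentwise $p_{k}\ge\eta_{k}+(\mb{A}\bs{p})_{k}$, and since $\eta_{k}>0$ and $(\mb{A}\bs{p})_{k}\ge 0$ this forces $\bs{p}>0$ and $\bs{p}-\mb{A}\bs{p}>0$ in every coordinate. By the Perron--Frobenius theorem \cite{horn1990matrix}, $\mb{A}$ has a nonzero nonnegative left eigenvector $\bs{u}^{T}\ge 0$ with $\bs{u}^{T}\mb{A}=\rho(\mb{A})\bs{u}^{T}$. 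Left-multiplying by $\bs{u}^{T}$ yields $(1-\rho(\mb{A}))\,\bs{u}^{T}\bs{p}=\bs{u}^{T}(\bs{p}-\mb{A}\bs{p})>0$, and since $\bs{u}^{T}\bs{p}>0$ I conclude $\rho(\mb{A})<1$. Reintroducing the quantifier over beamformers, the master problem is feasible precisely when the fixed-$\mb{\Upsilon}$ system is feasible for some $\mb{\Upsilon}$, which by the equivalence above is precisely the existence of beamforming vectors with $\rho(\mb{\Gamma}\mb{G})<1$, as claimed.

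The main obstacle is in the ``only if'' direction: $\mb{\Gamma}\mb{G}$ need not be irreducible, so its left Perron eigenvector $\bs{u}$ may have zero entries and the scalar product $\bs{u}^{T}(\bs{p}-\mb{A}\bs{p})$ is not obviously strictly positive. The argument survives only because the strict positivity of $\bs{\eta}$ propagates to make every coordinate of $\bs{p}-\mb{A}\bs{p}$ strictly positive, so that pairing against a merely nonnegative nonzero $\bs{u}$ still gives a strict inequality; keeping track of this strictness is the delicate point.
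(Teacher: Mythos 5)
Your proof is correct and takes exactly the approach the paper itself invokes: the paper gives no written proof for this lemma, simply citing the Perron--Frobenius theorem \cite{horn1990matrix}, and your argument (Neumann series for sufficiency, left Perron eigenvector for necessity) is the standard elaboration of that citation. Your care on the reducible case --- using $\bs{\eta}>0$ to force $\bs{p}-\mb{\Gamma}\mb{G}\bs{p}>0$ componentwise so that pairing with a merely nonnegative left eigenvector still yields strict inequality --- correctly closes the one gap a careless version of this argument would have.
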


Similar conclusion can be found in \cite{bengtsson_optimal_2001,schubert_solution_2004}.
From system perspective, given the users' SINR targets, $\mb{\Gamma}\mb{G}$ captures the interference relationship among the users which mainly depends on the physical environment.
If $\rho(\mb{\Gamma}\mb{G} ) <1$, the lower bound of $\bs{p}$ would be $(\mb{I} - \mb{\Gamma}\mb{G})^{-1}\bs{\eta}$. According to \eqref{define of eta}, $\bs{p}$ is in connection with the thermal noise.
Since the feasibility of the master problem depends on $\mb{\Gamma}\mb{G}$, to further reveal the physical meaning of $\rho(\mb{\Gamma}\mb{G})< 1$, motivated by \cite{schubert_solution_2004,jong_2010_design}, we have the following proposition.
\begin{pro}
\label{pro:feasibility}
The master problem is feasible if and only if there exists a set of beamforming vectors such that the following inequalities hold for any $\bs{p}\ge 0$,  
\begin{align}
\label{feasibility conditions 2}
  \frac{p_{m,i}\bs{w}_{m,i}^{H}\mb{R}_{m,m,i}\bs{w}_{m,i}}
	{\sum_{(n,j)\neq (m,i)}p_{n,j}\bs{w}_{n,j}^{H}\mb{R}_{n,m,i}\bs{w}_{n,j}
	} > \gamma_{m,i},~\forall (m,i).
\end{align}
\end{pro}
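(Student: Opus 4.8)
The plan is to derive Proposition~\ref{pro:feasibility} from the spectral-radius criterion already established in Lemma~\ref{feasibility condition}, by showing that, for a fixed set of beamformers, the interference-only inequalities \eqref{feasibility conditions 2} are nothing but a Perron--Frobenius reformulation of $\rho(\mb{\Gamma}\mb{G})<1$. Intuitively, \eqref{feasibility conditions 2} is the noise-free version of the SINR constraints: it demands that the signal dominate the pure interference by the factor $\gamma_{m,i}$, and once $\sigma_{m,i}^{2}$ is dropped only the \emph{ratios} among the powers matter. Accordingly I read the quantifier ``for any $\bs{p}\ge 0$'' as asserting the existence of a power allocation $\bs{p}$ (which we may take strictly positive) witnessing all the inequalities simultaneously, this being the only reading consistent with the presence of cross-interference.

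First I would recast \eqref{feasibility conditions 2} in matrix form. For a fixed set of beamformers, clearing the (positive) denominator and dividing by $\bs{w}_{m,i}^{H}\mb{R}_{m,m,i}\bs{w}_{m,i}$ turns the inequality for user $(m,i)$ into
\begin{align}
  p_{m,i} > \gamma_{m,i}\sum_{(n,j)\neq(m,i)} p_{n,j}\,[\mb{G}]_{(m,i),(n,j)} = [\mb{\Gamma}\mb{G}\bs{p}]_{m,i}, \notag
\end{align}
where I use the entries \eqref{eq:matrix G} of $\mb{G}$ and the fact that its diagonal does not enter the interference sum. Collecting these over all users yields the single vector inequality $(\mb{I}-\mb{\Gamma}\mb{G})\bs{p} > 0$, i.e.\ the strict analogue of \eqref{feasible region in vector} with $\bs{\eta}$ replaced by $0$.

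The core step is then the standard fact about the entrywise-nonnegative matrix $\mb{\Gamma}\mb{G}$ (see \cite{horn1990matrix}) that $\rho(\mb{\Gamma}\mb{G})<1$ if and only if there exists $\bs{p}>0$ with $\mb{\Gamma}\mb{G}\bs{p}<\bs{p}$. For the ``only if'' direction I would exhibit the witness explicitly: when $\rho(\mb{\Gamma}\mb{G})<1$ the Neumann series gives $(\mb{I}-\mb{\Gamma}\mb{G})^{-1}=\sum_{k\ge 0}(\mb{\Gamma}\mb{G})^{k}\succeq \mb{I}$ entrywise, so $\bs{p}=(\mb{I}-\mb{\Gamma}\mb{G})^{-1}\bs{1}\ge \bs{1}>0$ (with $\bs{1}$ the all-ones vector) satisfies $(\mb{I}-\mb{\Gamma}\mb{G})\bs{p}=\bs{1}>0$. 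For the ``if'' direction, from $\mb{\Gamma}\mb{G}\bs{p}<\bs{p}$ with $\bs{p}>0$ I would take $\lambda=\max_{m,i}[\mb{\Gamma}\mb{G}\bs{p}]_{m,i}/p_{m,i}<1$, so that $\mb{\Gamma}\mb{G}\bs{p}\le\lambda\bs{p}$, and invoke the Collatz--Wielandt (subinvariance) bound $\rho(\mb{\Gamma}\mb{G})\le\lambda<1$. Chaining these equivalences with Lemma~\ref{feasibility condition} (feasibility $\Leftrightarrow$ existence of beamformers with $\rho(\mb{\Gamma}\mb{G})<1$) gives the proposition.

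I expect the only genuine subtlety to be the Perron--Frobenius step, since $\mb{\Gamma}\mb{G}$ need not be irreducible (some $\mb{R}_{n,m,i}$ may make interference terms vanish), so I must rely on the reducible-friendly Neumann-series and subinvariance arguments above rather than on a strictly positive Perron eigenvector. A minor point to state carefully is the interpretation of the quantifier on $\bs{p}$ and the strictness of the inequalities, together with the positivity of every denominator $\bs{w}_{m,i}^{H}\mb{R}_{m,m,i}\bs{w}_{m,i}$, which we may assume since a beamformer with zero signal gain makes user $(m,i)$'s target unachievable and is therefore irrelevant to feasibility.
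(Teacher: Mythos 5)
Your proof is correct, and it takes a genuinely different route from the paper's. The paper likewise reduces the proposition to Lemma~\ref{feasibility condition} (existence of beamformers with $\rho(\mb{\Gamma}\mb{G})<1$), but it then invokes the Z-/P-/K-matrix machinery of \cite{jong_2010_design}: since $\rho(\mb{\Gamma}\mb{G})=\rho(\mb{I}^{-1}\mb{\Gamma}\mb{G})<1$ and $\mb{I}$ is a K-matrix, $\mb{I}-\mb{\Gamma}\mb{G}$ is a K-matrix, hence a P-matrix, and the inequalities \eqref{feasibility conditions 2} are then read off the sign-nonreversal property of P-matrices. You instead use two elementary Perron--Frobenius facts for entrywise-nonnegative matrices: the Neumann series to produce the explicit semipositivity witness $\bs{p}=(\mb{I}-\mb{\Gamma}\mb{G})^{-1}\bs{1}$, and the Collatz--Wielandt subinvariance bound for the converse. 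These are two equivalent characterizations of nonsingular M-matrices, so the mathematical content is the same; your version is more self-contained (no appeal to the K-matrix lemma of \cite{jong_2010_design}) and, as you note explicitly, is safe for reducible $\mb{\Gamma}\mb{G}$, whereas the paper buys brevity by citation.

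There is one substantive point on which your treatment is actually sounder than the paper's own proof: the quantifier. The paper's proof asserts, ``according to the definition of P-matrix,'' that $p_{m,i}[(\mb{I}-\mb{\Gamma}\mb{G})\bs{p}]_{m,i}>0$ for \emph{all} $(m,i)$ and all $\bs{p}>0$. The P-property only guarantees this for \emph{some} index, and the universal version is false: with $\mb{\Gamma}\mb{G}=\bigl[\begin{smallmatrix}0 & 1/2\\ 1/2 & 0\end{smallmatrix}\bigr]$ (so $\rho=1/2<1$) and $\bs{p}=(1,10)^{T}$, the first component of $(\mb{I}-\mb{\Gamma}\mb{G})\bs{p}$ is negative. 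Consequently the literal ``for any $\bs{p}\ge 0$'' in the statement cannot hold (it also fails trivially when some $p_{m,i}=0$), and your existential reading --- there exists $\bs{p}>0$ with $(\mb{I}-\mb{\Gamma}\mb{G})\bs{p}>0$, i.e., strict dominance of signal over pure interference for some common power allocation --- is the correct repair, matching both the M-matrix semipositivity characterization that the paper's K-matrix argument should have invoked and the noise-free feasibility interpretation the paper gives after the proposition. Your side remarks (positivity of the denominators $\bs{w}_{m,i}^{H}\mb{R}_{m,m,i}\bs{w}_{m,i}$, and that the diagonal of $\mb{G}$ does not enter the interference sum) correctly patch small imprecisions in \eqref{eq:matrix G} as well.
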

\begin{proof}
See Appendix \ref{proof of the feasibility pro}.
\end{proof}

The ratio on the left hand side of inequality \eqref{feasibility conditions 2} implies that the signal to interference ratio should be strictly larger than the target SINR. It shows mathematically that the feasibility mainly depends on the existence of the beamforming vectors that can coordinate the interference.

In summary, Lemma 1 and Proposition 1 can be viewed as further characterizations of the feasible region of the master problem.
They can provide practical design insights.
The above results remind us that feasibility can be achieved by removing the strongly interfered users, or decreasing some users' SINR targets. As shown in \cite{somekh_cth11-2_2006,vemula2006inter,stridh_system_2006}, user scheduling and network-wide congestion control can be employed for feasibility.
To focus on the algorithm design, we assume the master problem is feasible in the derivation of the algorithm. 

\section{Computational Solutions}
\label{sec:Computational Solution for the MP}

In this section, we derive a series of computation methods for solving the mathematical problem \eqref{master problem}. They are the underlying principles of the asynchronous distributed beamforming and power control algorithm.
We derive these methods with their asynchronous distributed implementations in mind. 

For distributed implementation, we start by employing the Lagrangian dual decomposition method \cite{bertsekas1989parallel,wiesel_linear_2006,yu_transmitter_2007}. Furthermore, since strong duality holds for the master problem, which is proved in \cite{bengtsson_optimal_2001}, the optimality of the dual decomposition method is guaranteed.
The Lagrangian of the master problem is written as follows,
\begin{align}
\label{eq:lagrangian0}
	&L(\bs{p},\mb{\Upsilon}, \bs{\lambda}) 
= \sum_{m,i}\lambda_{m,i}\sigma_{m,i}^{2} + \sum_{m,i}p_{m,i}\bs{w}_{m,i}^{H}\bs{w}_{m,i} \notag \\
&- \sum_{m,i}\lambda_{m,i}\left(1+\frac{1}{\gamma_{m,i}}\right)p_{m,i}\bs{w}_{m,i}^{H}\mb{R}_{m,m,i}\bs{w}_{m,i}
 \notag \\
&+ \sum_{m,i}\lambda_{m,i}\sum_{n,j}p_{n,j}\bs{w}_{n,j}^{H}\mb{R}_{n,m,i}\bs{w}_{n,j},
\end{align}
where $\bs{\lambda} = [\lambda_{1,1},\lambda_{1,2},...,\lambda_{M,K}]^{T}$ and $\bs{\lambda}\ge 0$.
About the last term of the above equation, we have  
\begin{align}
	&\sum_{m,i}\lambda_{m,i}\sum_{n,j}p_{n,j}\bs{w}_{n,j}^{H}\mb{R}_{n,m,i}\bs{w}_{n,j} \notag \\
= &\sum_{m,i}\sum_{n,j}p_{m,i}\bs{w}_{m,i}^{H}\lambda_{n,j}\mb{R}_{m,n,j}\bs{w}_{m,i}. \notag
\end{align}

Then, equation \eqref{eq:lagrangian0} can be rewritten as,
\begin{align}
\label{eq:lagrangian2}
	&L(\bs{p},\mb{\Upsilon}, \bs{\lambda}) 
= \sum_{m}\sum_{i}\lambda_{m,i}\sigma_{m,i}^{2} \notag \\
&+ \sum_{m}\Bigg(
\sum_{i}p_{m,i}\bs{w}_{m,i}^{H}
\Bigg( 
\mb{I} - \left(1+\frac{1}{\gamma_{m,i}}\right)\lambda_{m,i}\mb{R}_{m,m,i}
\notag \\	
&+ \sum_{n,j}\lambda_{n,j}\mb{R}_{m,n,j}
\Bigg)\bs{w}_{m,i} \Bigg).
\end{align}

Furthermore, we define 
\begin{align}
	\label{eq:sub lagrangian}
	&L_{m}(\bs{p}_{m},\mb{\Upsilon}_{m}, \bs{\lambda}) = \sum_{i}\lambda_{m,i}\sigma_{m,i}^{2} \notag \\
&+\sum_{i}p_{m,i}\bs{w}_{m,i}^{H}
\Bigg( 
\mb{I} - \left(1+\frac{1}{\gamma_{m,i}}\right)\lambda_{m,i}\mb{R}_{m,m,i}
\notag \\
	&+ \sum_{n,j}\lambda_{n,j}\mb{R}_{m,n,j}
\Bigg)\bs{w}_{m,i},
\end{align}
where $\bs{p}_{m} = [p_{m,1},...,p_{m,K}]^{T}$ and $\mb{\Upsilon}_{m} = [\bs{w}_{m,1},...,\bs{w}_{m,K}]$.

Then, the Lagrangian is 
\begin{align}
	\label{eq: lagrangian decomposition}
	L(\bs{p},\mb{\Upsilon}, \bs{\lambda}) = \sum_{m}L_{m}(\bs{p}_{m},\mb{\Upsilon}_{m}, \bs{\lambda}).
\end{align}

According to the Lagrangian theory, the \emph{primal problem} is
\begin{align}
\label{PP}
\underset{\bs{p}\ge 0,\mb{\Upsilon}}{\text{min.}}~~L(\bs{p},\mb{\Upsilon}, \bs{\lambda}).
\end{align}

Based on equation \eqref{eq: lagrangian decomposition}, the primal problem can be decomposed into $M$ sub-primal problems,
\begin{align}
\label{sub-PP m}
~~~\underset{\bs{p}_{m}\ge 0,\mb{\Upsilon}_{m}}{\text{min.}}
~~L_{m}(\bs{p}_{m},\mb{\Upsilon}_{m}, \bs{\lambda}).
\end{align}
With the optimal dual variables, the beamforming and power control can be performed within each BS by solving the corresponding sub-primal problem.

To avoid being trivial, we require, $\forall (m,i)$,
\begin{align}
	\label{semidefinte condition}
	\mb{I} - \left(1+\frac{1}{\gamma_{m,i}}\right)\lambda_{m,i}\mb{R}_{m,m,i}
	+ \sum_{n,j}\lambda_{n,j}\mb{R}_{m,n,j} \succeq 0.
\end{align}
Otherwise, the optimum of the primal problem would be negative infinity.

From \eqref{PP} and \eqref{semidefinte condition}, the \emph{dual problem} is
\begin{align}
	\label{DP}
\underset{\bs{\lambda}\ge 0}{\text{max.}}
~~&\sum_{m,i}\lambda_{m,i}\sigma_{m,i}^{2} \notag \\
\text{s.t.}~~&\mb{I} - \left(1+\frac{1}{\gamma_{m,i}}\right)\lambda_{m,i}\mb{R}_{m,m,i}
	+ \sum_{n,j}\lambda_{n,j}\mb{R}_{m,n,j} \succeq 0, \notag \\ 
	&~~~~~~~~~~~~~~~~~~~~~~~~~~~~~~~~~~~~~~~~~~~~~~~\forall (m,i).	
\end{align}

The dual problem can be centrally solved by CVX \cite{grant2008cvx}.
However, a computation method designed for distributed implementation is still unavailable. 

If instantaneous CSI is used, inequalities \eqref{semidefinte condition} are linear inequalities in essence. 
When the corresponding dual problem achieves optimality, they are all active, i.e., establishing with equality.
Based on this property, an efficient iterative method is derived in \cite{wiesel_linear_2006,yu_transmitter_2007,dahrouj_coordinated_2010}.

However, since the ranks of the spatial correlation matrices are greater than one, the inequalities in \eqref{semidefinte condition} are linear matrix inequalities. 
This difference makes it challenging to derive a similar iterative algorithm.

\subsection{Methods for Computing Dual Variables}

One of our major contributions is an asynchronous iterative method for computing the dual variables. To design the method, we first propose the following lemma.
\begin{lm}
\label{minimum non-negative eigenvalue}
Given $N\times N$ Hermitian matrices, $\mb{A}\succ 0$, $\mb{B}\succeq 0 $, 
the minimum non-negative eigenvalue of matrix pencil $\mb{A} - \mu\mb{B}$,  denoted by $\mu_{+}(\mb{A},\mb{B})$, exists and is finite.
Assuming variable $\mu\ge 0$, we have
$$
	\mb{A} - \mu\mb{B} \succeq 0  ~\Leftrightarrow~  \mu \le \mu_{+}(\mb{A},\mb{B}).
$$
\end{lm}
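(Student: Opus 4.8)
The plan is to reduce the semidefiniteness of the pencil to an ordinary Hermitian-eigenvalue question by a congruence transformation, and then read off both the existence of $\mu_{+}(\mb{A},\mb{B})$ and the stated equivalence from the spectrum of a single positive semidefinite matrix. Since $\mb{A}\succ 0$, it has an invertible Hermitian positive definite square root $\mb{A}^{1/2}$, and I would first factor
\begin{align}
\mb{A} - \mu\mb{B} = \mb{A}^{1/2}\left(\mb{I} - \mu\,\mb{A}^{-1/2}\mb{B}\mb{A}^{-1/2}\right)\mb{A}^{1/2}. \notag
\end{align}
Writing $\mb{C} = \mb{A}^{-1/2}\mb{B}\mb{A}^{-1/2}$ and using that congruence by the invertible matrix $\mb{A}^{1/2}$ preserves positive semidefiniteness, I obtain $\mb{C}\succeq 0$ together with the crucial equivalence $\mb{A}-\mu\mb{B}\succeq 0 \Leftrightarrow \mb{I}-\mu\mb{C}\succeq 0$.

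Next I would diagonalize $\mb{C}$, whose eigenvalues are real and non-negative, say $0\le c_{1}\le\cdots\le c_{N}=\rho(\mb{C})$; here $\mb{B}\neq 0$, so $\mb{C}\neq 0$ and $\rho(\mb{C})>0$. Because the eigenvalues of $\mb{I}-\mu\mb{C}$ are $1-\mu c_{k}$, for $\mu\ge 0$ the condition $\mb{I}-\mu\mb{C}\succeq 0$ holds exactly when $1-\mu c_{k}\ge 0$ for every $k$; the zero eigenvalues of $\mb{C}$ impose no constraint, and the remaining conditions are all dominated by the largest eigenvalue, collapsing to the single inequality $\mu\le 1/\rho(\mb{C})$. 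I would then set $\mu_{+}(\mb{A},\mb{B}) = 1/\rho(\mb{C})$ and confirm it is the sought eigenvalue via the factorization $\det(\mb{A}-\mu\mb{B}) = \det(\mb{A})\prod_{k}(1-\mu c_{k})$, which vanishes precisely at the values $1/c_{k}$ with $c_{k}>0$; the smallest of these is $1/\rho(\mb{C})$, so $\mu_{+}(\mb{A},\mb{B})$ exists, equals the minimum non-negative generalized eigenvalue of the pencil, and is finite since $\rho(\mb{C})>0$. Chaining this with the congruence equivalence of the previous step yields $\mb{A}-\mu\mb{B}\succeq 0 \Leftrightarrow \mu\le\mu_{+}(\mb{A},\mb{B})$.

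The factorization and diagonalization are routine; the step that needs care is the bookkeeping when $\mb{B}$ (hence $\mb{C}$) is singular, so that only the \emph{positive} eigenvalues of $\mb{C}$ produce finite pencil eigenvalues while the zero eigenvalues never violate the inequality. The main obstacle is thus not an estimate but making precise that the threshold $1/\rho(\mb{C})$ is genuinely attained as the smallest non-negative generalized eigenvalue, which the explicit determinant factorization settles. I note finally that $\mb{A}^{-1}\mb{B}$ is similar to $\mb{C}$ through $\mb{A}^{1/2}$, so one may equivalently write $\mu_{+}(\mb{A},\mb{B}) = 1/\rho(\mb{A}^{-1}\mb{B})$.
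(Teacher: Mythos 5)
Your proof is correct, and it takes a genuinely different route from the paper's. The paper proves the lemma in three separate steps: existence and finiteness come from the variational characterization $\mu_{+}(\mb{A},\mb{B}) = \min_{\bs{x}^{H}\mb{B}\bs{x}=1}\bs{x}^{H}\mb{A}\bs{x}$ (citing Stewart); the direction $\mb{A}-\mu\mb{B}\succeq 0 \Rightarrow \mu\le\mu_{+}$ is a quadratic-form argument split over $\bs{x}^{H}\mb{B}\bs{x}=0$ versus $\bs{x}^{H}\mb{B}\bs{x}>0$; and the converse requires a determinant argument via the generalized Schur decomposition together with a dedicated eigenvalue-continuity lemma (Hoffman--Wielandt) to rule out any $\beta_{k}(\mu)$ changing sign on $[0,\mu_{+}]$. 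Your congruence $\mb{A}-\mu\mb{B} = \mb{A}^{1/2}\left(\mb{I}-\mu\mb{C}\right)\mb{A}^{1/2}$ with $\mb{C}=\mb{A}^{-1/2}\mb{B}\mb{A}^{-1/2}$ collapses all of this into ordinary spectral theory for one positive semidefinite matrix: existence, finiteness, and both directions of the equivalence follow simultaneously from $1-\mu c_{k}\ge 0$, and you get the closed form $\mu_{+}(\mb{A},\mb{B}) = 1/\rho(\mb{A}^{-1}\mb{B})$ for free, which is more elementary (no perturbation theory, no Schur decomposition) and also makes the computation in the DC thread transparent (a single Hermitian eigendecomposition rather than QZ). Your explicit handling of singular $\mb{B}$ and the caveat $\mb{B}\neq 0$ is sound and in fact sharper than the paper, which leaves that assumption implicit (in the application $\mb{B}=\mb{R}_{m,m,i}\neq 0$, so nothing is lost). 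The one thing the paper's route buys that yours does not state is the variational characterization itself, which is reused verbatim later: the monotonicity and scalability proofs for $\bs{J}(\bs{\lambda})$ in Appendix C lean directly on the minimization problem \eqref{minimization problem for eigenvalue}. If your proof were spliced in, you would want to record the corollary $\mu_{+}(\mb{A},\mb{B}) = \min\{\bs{x}^{H}\mb{A}\bs{x} : \bs{x}^{H}\mb{B}\bs{x}=1\}$, which follows immediately from your formula since $1/\rho(\mb{C})$ is the reciprocal Rayleigh-quotient maximum of $\mb{C}$ under the substitution $\bs{y}=\mb{A}^{1/2}\bs{x}$.
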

\begin{proof}
	See Appendix \ref{proof of min-eig theorem}.
\end{proof}
\begin{rmk}
The related definitions about matrix pencil are provided in Appendix \ref{proof of min-eig theorem}. 
The general method to compute the eigenvalues of a matrix pencil is based on the generalized Schur decomposition \cite{golub1996matrix} (see Definition \ref{gener schur} in Appendix \ref{proof of min-eig theorem}).
When matrix $\mb{B}$ is non-singular, the set of eigenvalues of matrix pencil $\mb{A} - \mu\mb{B}$ is the same as the set of eigenvalues of matrix $\mb{A}\mb{B}^{-1}$.
\end{rmk}

Next, we rewrite the constraints of the dual problem \eqref{DP} as 
\begin{align}
	\label{constraint for lambda}
	&\left(1+\frac{1}{\gamma_{m,i}}\right)^{-1}
	\left(
	\mb{I}+ \sum_{n,j}\lambda_{n,j}\mb{R}_{m,n,j}
	\right)
	 - \lambda_{m,i}\mb{R}_{m,m,i}
	 \succeq 0, \notag \\ 
	&~~~~~~~~~~~~~~~~~~~~~~~~~~~~~~~~~~~~~~~~~~~~~~~~~~~~~~\forall (m,i).
\end{align}
Since correlation matrices $\mb{R}_{n,m,i}\succeq 0,~\forall n,m\in\mathcal{M},~\forall i\in\mathcal{I}$, we have $\left(
	\mb{I}+ \sum_{n,j}\lambda_{n,j}\mb{R}_{m,n,j}
	\right) \succ 0$ and $\mb{R}_{m,m,i} \succeq 0$. According to Lemma \ref{minimum non-negative eigenvalue}, inequalities in \eqref{constraint for lambda} are equivalent to
	\begin{align}
	\label{apply theorem}
		&\lambda_{m,i} \le \notag \\
	  &\mu_{+}\Bigg( \left(1+\frac{1}{\gamma_{m,i}}\right)^{-1}
	\left(
	\mb{I}+ \sum_{n,j}\lambda_{n,j}\mb{R}_{m,n,j}
	\right) , \mb{R}_{m,m,i} \Bigg),\notag \\ 
	&~~~~~~~~~~~~~~~~~~~~~~~~~~~~~~~~~~~~~~~~~~~~~~~~~~~~~~\forall (m,i).
	\end{align}


Furthermore, since the right hand side of \eqref{apply theorem} also contains the dual variables, we define the following functions,
\begin{align}
	\label{eq:min-eig mapping}
	&J_{m,i}\left( \bs{\lambda} \right) = \notag \\
	&\mu_{+}\left( \left(1+\frac{1}{\gamma_{m,i}}\right)^{-1}
	\left(
	\mb{I}+ \sum_{n,j}\lambda_{n,j}\mb{R}_{m,n,j}
	\right) , \mb{R}_{m,m,i} \right), \notag \\	&~~~~~~~~~~~~~~~~~~~~~~~~~~~~~~~~~~~~~~~~~~~~~~~~~~~~~~\forall (m,i).
	\end{align}

Combining \eqref{apply theorem} and \eqref{eq:min-eig mapping}, we have
$		\bs{\lambda} \le \bs{J}\left( \bs{\lambda} \right), $
where $\bs{J}(\bs{\lambda}) = \left[J_{1,1}(\bs{\lambda}),J_{1,2}(\bs{\lambda}),...,J_{M,K}(\bs{\lambda})\right]^{T}$.
Thus far, the dual problem \eqref{DP} is equivalent to
$$
\underset{\bs{\lambda}\ge 0}{\text{max.}}
~~\bs{\sigma}^{T}\bs{\lambda}~~~~~
\text{s.t.}
~~\bs{\lambda} \le \bs{J}\left( \bs{\lambda} \right),
$$
where $\bs{\sigma} = [ \sigma_{1,1}^{2},\sigma_{1,2}^{2},...,\sigma_{M,K}^{2} ]^{T}$.

To solve the above problem, we propose an \emph{iterative dual computation method} which converges under asynchronous implementation. Briefly, the method is to update the dual variables with $\bs{J}\left( \bs{\lambda} \right)$.
We first propose the following theorem about its synchronous implementation.
\begin{thm}
\label{dual convergence}
From any feasible initial point $\bs{\lambda}(0)$,
the sequence generated by the following iterations,
$$
\bs{\lambda}(t+1) = \bs{J}\left( \bs{\lambda}(t) \right),~t = 0,1,2,...,
$$
converges to a unique point $\bs{\lambda}^{\star}$ which satisfies
	$
	\bs{\lambda}^{\star} = \bs{J}\left( \bs{\lambda}^{\star} \right),
	$
and is the optimal solution of the dual problem.
\end{thm}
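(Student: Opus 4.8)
The plan is to recognize the vector map $\bs{J}$ defined in \eqref{eq:min-eig mapping} as a \emph{standard interference function} in the sense of \cite{yates1995framework}, and then to combine the resulting monotone-convergence and uniqueness properties with weak duality to identify the limit as the dual optimum. Recall that $\bs{J}$ is standard if it is (i) positive, $\bs{J}(\bs{\lambda}) > 0$; (ii) monotone, $\bs{\lambda} \ge \bs{\lambda}'$ implies $\bs{J}(\bs{\lambda}) \ge \bs{J}(\bs{\lambda}')$; and (iii) scalable, $\alpha\bs{J}(\bs{\lambda}) > \bs{J}(\alpha\bs{\lambda})$ for every $\alpha > 1$. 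Each property I would verify componentwise through the characterization of $\mu_+$ in Lemma \ref{minimum non-negative eigenvalue}. Writing $\mb{A}_{m,i}(\bs{\lambda}) = (1 + 1/\gamma_{m,i})^{-1}(\mb{I} + \sum_{n,j}\lambda_{n,j}\mb{R}_{m,n,j})$ and $\mb{B}_{m,i} = \mb{R}_{m,m,i}$, so that $J_{m,i}(\bs{\lambda}) = \mu_+(\mb{A}_{m,i}(\bs{\lambda}), \mb{B}_{m,i})$. Positivity follows because $\mb{A}_{m,i}(\bs{\lambda}) \succeq (1+1/\gamma_{m,i})^{-1}\mb{I} \succ 0$, so $\mb{A}_{m,i} - \mu\mb{B}_{m,i} \succ 0$ for all sufficiently small $\mu > 0$, hence $\mu_+ > 0$ by Lemma \ref{minimum non-negative eigenvalue}.

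For monotonicity and scalability I would first establish two elementary facts about $\mu_+$. First, homogeneity, $\mu_+(\alpha\mb{A}, \mb{B}) = \alpha\mu_+(\mb{A}, \mb{B})$ for $\alpha > 0$, which is immediate from Lemma \ref{minimum non-negative eigenvalue} since $\alpha\mb{A} - \mu\mb{B} \succeq 0 \Leftrightarrow \mb{A} - (\mu/\alpha)\mb{B} \succeq 0$. Second, a strict monotonicity property: if $\mb{A} \succ \mb{A}' \succ 0$ then $\mu_+(\mb{A}, \mb{B}) > \mu_+(\mb{A}', \mb{B})$. This is the crux. Because $\mu_+' := \mu_+(\mb{A}', \mb{B})$ is a pencil eigenvalue, $\mb{A}' - \mu_+'\mb{B}$ is singular, so there is $\mathbf{v}\neq 0$ with $\mathbf{v}^H(\mb{A}' - \mu_+'\mb{B})\mathbf{v} = 0$; then $\mathbf{v}^H(\mb{A} - \mu_+'\mb{B})\mathbf{v} = \mathbf{v}^H(\mb{A} - \mb{A}')\mathbf{v} > 0$, and by continuity $\mb{A} - \mu\mb{B} \succeq 0$ persists slightly beyond $\mu_+'$, giving $\mu_+(\mb{A}, \mb{B}) > \mu_+'$. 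The non-strict version, $\mb{A} \succeq \mb{A}' \Rightarrow \mu_+(\mb{A}, \mb{B}) \ge \mu_+(\mb{A}', \mb{B})$, comes from adding $\mb{A} - \mb{A}' \succeq 0$ to $\mb{A}' - \mu_+'\mb{B} \succeq 0$. Monotonicity of $\bs{J}$ then follows since $\bs{\lambda}\ge\bs{\lambda}'$ gives $\mb{A}_{m,i}(\bs{\lambda}) \succeq \mb{A}_{m,i}(\bs{\lambda}')$ because $\mb{R}_{m,n,j}\succeq 0$. Scalability follows because $\alpha\mb{A}_{m,i}(\bs{\lambda}) - \mb{A}_{m,i}(\alpha\bs{\lambda}) = (1+1/\gamma_{m,i})^{-1}(\alpha-1)\mb{I} \succ 0$, so combining homogeneity with strict monotonicity yields $\alpha J_{m,i}(\bs{\lambda}) = \mu_+(\alpha\mb{A}_{m,i}(\bs{\lambda}), \mb{B}_{m,i}) > \mu_+(\mb{A}_{m,i}(\alpha\bs{\lambda}), \mb{B}_{m,i}) = J_{m,i}(\alpha\bs{\lambda})$.

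With $\bs{J}$ standard, convergence from a feasible $\bs{\lambda}(0)$ (one with $\bs{\lambda}(0) \le \bs{J}(\bs{\lambda}(0))$) is handled by monotone convergence: feasibility gives $\bs{\lambda}(0) \le \bs{\lambda}(1)$, and monotonicity propagates this to a componentwise non-decreasing sequence, with each iterate again feasible. Weak duality bounds $\bs{\sigma}^T\bs{\lambda}(t)$ by the finite primal optimum, and since $\bs{\sigma} > 0$ this bounds every component, so the monotone bounded sequence converges; continuity of $\mu_+$ (from continuity of the pencil spectrum) passes the limit through $\bs{J}$ to yield a fixed point $\bs{\lambda}^\star = \bs{J}(\bs{\lambda}^\star)$. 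Uniqueness is the classical scalability argument: for two fixed points, taking the largest componentwise ratio $\alpha$ and assuming $\alpha > 1$ contradicts scalability, forcing them to coincide. Finally, for optimality, every dual-feasible $\bs{\lambda}$ satisfies $\bs{\lambda} \le \bs{J}(\bs{\lambda})$, so iterating from it converges upward to the \emph{same} unique fixed point and hence $\bs{\lambda} \le \bs{\lambda}^\star$; because $\bs{\sigma} > 0$, this makes $\bs{\lambda}^\star$ maximize $\bs{\sigma}^T\bs{\lambda}$ over the feasible set, i.e., it solves the dual problem \eqref{DP}.

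I expect the scalability step, specifically the strict monotonicity $\mb{A} \succ \mb{A}' \Rightarrow \mu_+(\mb{A},\mb{B}) > \mu_+(\mb{A}',\mb{B})$, to be the main obstacle, since $\mb{B} = \mb{R}_{m,m,i}$ may be singular and $\mu_+$ is only the minimum non-negative pencil eigenvalue; the argument must exploit that $\mb{A}' - \mu_+'\mb{B}$ is a genuine singular pencil rather than relying on invertibility of $\mb{B}$. A secondary technical point is justifying continuity of $\mu_+(\cdot,\cdot)$ near the limit so that the fixed-point equation survives the passage to the limit.
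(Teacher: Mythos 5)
Your proposal is correct and takes essentially the same route as the paper: both prove that $\bs{J}$ is a standard interference function in the sense of \cite{yates1995framework} (positivity, monotonicity, scalability) via the matrix-pencil characterization $\mb{A}-\mu\mb{B}\succeq 0 \Leftrightarrow \mu\le\mu_{+}(\mb{A},\mb{B})$ of Lemma \ref{minimum non-negative eigenvalue}, and then run the Yates fixed-point machinery for convergence, uniqueness, and optimality. The differences are minor and harmless: you verify monotonicity by adding positive semidefinite matrices and re-invoking Lemma \ref{minimum non-negative eigenvalue} (your strict version is most cleanly stated as $\mb{A}-\mu_{+}'\mb{B} = (\mb{A}-\mb{A}') + (\mb{A}'-\mu_{+}'\mb{B}) \succ 0$, rather than via a single test vector), where the paper uses the variational characterization $\mu_{+} = \min_{\bs{x}^{H}\mb{B}\bs{x}=1}\bs{x}^{H}\mb{A}\bs{x}$ in its Lemma \ref{lm:monotonicity}, and you make explicit the weak-duality boundedness and optimality-by-domination steps that the paper delegates to \cite{yates1995framework}.
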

\begin{proof}
See Appendix \ref{proof of the dual convergence}.
\end{proof}

\begin{rmk}
The key point of proving the above theorem is to prove that the specific function $\bs{J}(\bs{\lambda})$ is a standard function as defined in \cite{yates1995framework}. Besides, the work in \cite{cai2011unified} proved the convergence of a similar fixed point iterations for instantaneous CSI.
Moreover, the key feature of the synchronous implementation is that when updating at the $t$th iteration, all the dual variables are updated with the same input dual variable $\bs{\lambda}(t)$.
When the dual variables are updated with the different input dual variables, the iterative dual computation method is implemented asynchronously. 
\end{rmk}

We employ the asynchronous algorithm model in \cite{bertsekas1989parallel}. Define $\bs{\lambda}( \bs{f}_{m,i}(t) )$ as the input dual variables used for generating $\lambda_{m,i}(t+1)$, where 
\begin{align}
\label{asyn time index}
	\bs{\lambda}( \bs{f}_{m,i}(t) ) &= [\lambda_{1,1}(f_{m,i}^{1,1}(t)),\lambda_{1,2}(f_{m,i}^{1,2}(t)), \notag \\ 
	&~~~~...,\lambda_{M,K}(f_{m,i}^{M,K}(t)) ]^{T}. \\
	\bs{f}_{m,i}(t) &= [f_{m,i}^{1,1}(t),f_{m,i}^{1,2}(t),...,f_{m,i}^{M,K}(t)]^{T}.
\end{align}
Function $f_{m,i}^{n,j}(t) \in \{0,1,...,t-1,t\}$ is the time index of dual variable $\lambda_{n,j}$ used for updating $\lambda_{m,j}(t+1)$.
So, $\lambda_{n,j}(f_{m,i}^{n,j}(t))$ is the value of the dual variable $\lambda_{n,j}$ used for updating $\lambda_{m,i}(t+1)$.

The asynchronous model also captures the case that the dual variables are not updated in every iteration. Denote by $\mathcal{T}_{m,i}$ the set of time indexes at which dual variable $\lambda_{m,i}$ is updated. For example, $\mathcal{T}_{m,i} = \{1,3,7,...\}$.
We assume that $\mathcal{T}_{m,i}$ is infinite, and, provided any time $t_{0}$, there is $t_{1}$ such that $f_{m,i}^{n,j}(t) \ge t_{0}$, for all $t_{1}$. 
The physical meaning of this assumption is that the transmission of the dual variables does not fail in all the iterations.

The convergence of the iterative dual computation method under asynchronous implementation is formally given as the following theorem.
\begin{thm}
\label{asyn iter dual comp methd convergence}
From any feasible initial point $\bs{\lambda}(0)$,
the sequence generated by the following iterations, $\forall (m,i),$
$$
\lambda_{m,i}(t+1) = 
\left\{
\begin{array}{ll}
J_{m,i}\left( \bs{\lambda}( \bs{f}_{m,i}(t) ) \right), & t\in\mathcal{T}_{m,i}, \\
\lambda_{m,i}(t),& \t{otherwise},
\end{array}
\right.
$$
converges to a unique point $\bs{\lambda}^{\star}$ which satisfies
	$
	\bs{\lambda}^{\star} = \bs{J}\left( \bs{\lambda}^{\star} \right),
	$
and is the optimal solution of the dual problem.
\end{thm}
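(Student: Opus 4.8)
The plan is to reduce the claim to the asynchronous convergence theory for standard interference functions developed in \cite{yates1995framework} and \cite{bertsekas1989parallel}, so that essentially no new analysis of the matrix pencil $\mu_{+}(\cdot,\cdot)$ is needed. The proof of Theorem \ref{dual convergence} already establishes that $\bs{J}(\bs{\lambda})$ is a \emph{standard} function, i.e., that it satisfies positivity ($\bs{J}(\bs{\lambda}) > 0$), monotonicity ($\bs{\lambda} \ge \bs{\lambda}'$ implies $\bs{J}(\bs{\lambda}) \ge \bs{J}(\bs{\lambda}')$), and scalability ($\alpha > 1$ implies $\alpha\bs{J}(\bs{\lambda}) > \bs{J}(\alpha\bs{\lambda})$); it also supplies the unique fixed point $\bs{\lambda}^{\star} = \bs{J}(\bs{\lambda}^{\star})$ that solves the dual problem. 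I would reuse these three properties as the only analytic input.

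First I would verify that the update model specified by the sets $\mathcal{T}_{m,i}$ and the index functions $\bs{f}_{m,i}(t)$ is an instance of the totally asynchronous model of \cite{bertsekas1989parallel}: the assumption that each $\mathcal{T}_{m,i}$ is infinite guarantees that every coordinate is updated infinitely often, while the assumption that for every $t_{0}$ there is a later time after which $f_{m,i}^{n,j}(t) \ge t_{0}$ guarantees that outdated information is eventually flushed out, i.e., the delays $t - f_{m,i}^{n,j}(t)$ are not persistently unbounded. This assumption-matching step is routine but must be stated carefully, since the convergence theorem fails without it.

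The analytic core is to construct the nested ``box'' sequence required by the asynchronous convergence theorem. Using positivity of the fixed point $\bs{\lambda}^{\star} > 0$, I would pick $\beta \ge 1$ large enough that $\beta^{-1}\bs{\lambda}^{\star} \le \bs{\lambda}(0) \le \beta\bs{\lambda}^{\star}$, set $\bs{u}(0) = \beta\bs{\lambda}^{\star}$ and $\bs{\ell}(0) = \beta^{-1}\bs{\lambda}^{\star}$, and define $\bs{u}(k+1) = \bs{J}(\bs{u}(k))$ and $\bs{\ell}(k+1) = \bs{J}(\bs{\ell}(k))$. Scalability gives $\bs{J}(\beta\bs{\lambda}^{\star}) < \beta\bs{J}(\bs{\lambda}^{\star}) = \beta\bs{\lambda}^{\star}$, so $\{\bs{u}(k)\}$ is decreasing; symmetrically $\{\bs{\ell}(k)\}$ is increasing; monotonicity keeps both sequences on their respective sides of $\bs{\lambda}^{\star}$; and since any limit of either sequence is a fixed point, the uniqueness granted by Theorem \ref{dual convergence} forces both to converge to $\bs{\lambda}^{\star}$. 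The boxes $X(k) = \{\bs{\lambda} : \bs{\ell}(k) \le \bs{\lambda} \le \bs{u}(k)\}$ are then nested, contain $\bs{\lambda}^{\star}$, shrink to the single point $\bs{\lambda}^{\star}$, and satisfy $\bs{J}(X(k)) \subseteq X(k+1)$ by monotonicity, which is precisely the box condition of \cite{bertsekas1989parallel}.

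With the box condition and the totally asynchronous model in hand, I would invoke the asynchronous convergence theorem: the iterates $\bs{\lambda}(t)$, which start inside $X(0)$, are trapped in successively smaller boxes and therefore converge to the unique fixed point $\bs{\lambda}^{\star}$, which Theorem \ref{dual convergence} identifies as the optimal dual solution. I expect the main obstacle to be the box-contraction step rather than the model bookkeeping: one must lean on scalability to upgrade ``monotone and bounded'' into genuine convergence to a single point, and handle the edge case in which $\bs{\lambda}(0)$ has zero or near-zero components, where the lower bound $\beta^{-1}\bs{\lambda}^{\star} \le \bs{\lambda}(0)$ may fail for any finite $\beta$ and one must first advance the iteration one step, using positivity of $\bs{J}$, before a uniform $\beta$ can be chosen.
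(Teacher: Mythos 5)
Your proposal is correct and takes essentially the same route as the paper: the paper's own proof is a two-line reduction that notes $\bs{J}(\bs{\lambda})$ is a standard function (established in Appendix~C) and then invokes the reasoning of \cite{yates1995framework} together with the asynchronous convergence theorem of \cite[pp.~431--433]{bertsekas1989parallel}, which is precisely the nested-box argument you spell out. Your write-up is in fact more careful than the paper's sketch (e.g., verifying the total-asynchronism assumptions and handling the edge case where $\bs{\lambda}(0)$ has zero components before a finite $\beta$ can be chosen), but the underlying approach is identical.
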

\begin{proof}
We have proved that $\bs{J}\left( \bs{\lambda} \right)$ is a standard function in Appendix \ref{proof of the dual convergence}. The proof of this theorem is also based on this property.
With the reasoning similar to \cite{yates1995framework}, this theorem can be proved based on Theorem \ref{dual convergence} and the conclusions from \cite[pp. 431-433]{bertsekas1989parallel}.
\end{proof}
In summary, the iterative dual computation method converges to the same optimal point $\bs{\lambda}^{\star}$ under both synchronous and asynchronous implementations.

\subsection{Methods for Computing Primal Variables}

We derive the methods for computing the primal variables in this section.

\subsubsection{Beamforming}
Thanks to the dual decomposition, each cell's beamforming problem realtes to one sub-primal problem.
The optimality conditions for the $m$th cell's sub-primal problem are
\begin{align}
\label{optimality condition for SP m}
\frac{ \partial L_{m}(\bs{p}_{m},\mb{\Upsilon}_{m}, \bs{\lambda}) }{ \partial \bs{w}_{m,i}}
= &\Bigg( 
\mb{I} - \left(1+\frac{1}{\gamma_{m,i}}\right)\lambda_{m,i}\mb{R}_{m,m,i}
	\notag \\
	&+ \sum_{n,j}\lambda_{n,j}\mb{R}_{m,n,j}
\Bigg)\bs{w}_{m,i} = \bs{0},~\forall i\in\mathcal{I}, 
\end{align}
which are equivalent to
\begin{align}
\label{bf equation}
	\bs{w}_{m,i} = \text{null}&\Bigg(
\mb{I} - \left(1+\frac{1}{\gamma_{m,i}}\right)\lambda_{m,i}\mb{R}_{m,m,i}
	\notag \\
	&+ \sum_{n,j}\lambda_{n,j}\mb{R}_{m,n,j}
	\Bigg),~\forall i\in\mathcal{I}. 
\end{align}

Furthermore, define 
$$
\mb{E}_{m,i}(\bs{\lambda}) = \mb{I} - \left(1+\frac{1}{\gamma_{m,i}}\right)\lambda_{m,i}\mb{R}_{m,m,i}
	+ \sum_{n,j}\lambda_{n,j}\mb{R}_{m,n,j}
$$
	which is Hermitian. 
If the matrix $\mb{E}_{m,i}(\bs{\lambda})$ is full rank, the null space will be a zero vector.
From the computation of the dual variables, we can see that the matrix $\mb{E}_{m,i}(\bs{\lambda}^{\star})$ has an eigenvalue to be $0$. Thus, $\mb{E}_{m,i}(\bs{\lambda}^{\star})$ is not full rank and has a non-zero null space.

It requires that the beamforming of each cell can only be performed after the dual variables are optimal.
The beamforming could be delayed for both numerical reasons, and practical reasons, like transmission delay.
Our method enables beamforming to be performed at any time. 

When $\mb{E}_{m,i}(\bs{\lambda})$ is full rank, we have the eigendecomposition of it as $\mb{V}_{m,i}\mb{\Lambda}_{m,i}\mb{V}_{m,i}^{-1}$, where $\mb{\Lambda}_{m,i}$ is a diagonal matrix and $\mb{V}_{m,i}$ is a unitary matrix. 
Let 
$$
k^{\prime} = \underset{k = 1,...,N}{\t{arg min~}}\|[\mb{\Lambda}_{m,i}]_{k,k}\|.
$$ 
Define a diagonal matrix $\widetilde{\mb{\Lambda}}_{m,i}$ whose components are the same as $\mb{\Lambda}_{m,i}$ except $[\widetilde{\mb{\Lambda}}_{m,i}]_{k^{\prime},k^{\prime}} = 0$.
The beamforming vector can be computed as,
\begin{align}
\label{method for BF}
	\bs{w}_{m,i} = \t{null}\left( \mb{V}_{m,i}\widetilde{\mb{\Lambda}}_{m,i}\mb{V}_{m,i}^{-1} \right).
\end{align}
\begin{rmk}
This method provides a set of suboptimal beamforming vectors when $\bs{\lambda}$ is not optimal.
Most importantly, this method does not affect the result when the dual variables are optimal. 
\end{rmk}
\subsubsection{Power Control}
Going back to the master problem with the optimal beamforming vectors, the master problem reduces to a typical power control problem \cite{yates1995framework},
\begin{align}
	\label{reduced MP}
	\underset{\bs{p}\ge 0}{\text{min.}}~~\sum_{m,i} p_{m,i}~~~
	\text{s.t.}~~p_{m,i} \ge I_{m,i}(\bs{p}),~\forall (m,i),
\end{align}
where 
\begin{align}
\label{power iteration function}
	&I_{m,i}(\bs{p}) \notag \\
 =&\left( \sum_{n,j}p_{n,j}g_{(n,j),(m,i)} + \sigma_{m,i}^{2}\right) \left( 1+ \frac{1}{\gamma_{m,i}}\right)^{-1} g_{(m,i),(m,i)}^{-1}, 
\notag \\
&~~~~~~~~~~~~~~~~~~~~~~~~~~~~~~~~~~~~~~~~~~~~~~~~~~~~\forall (m,i).
\end{align}
with power gains $g_{(n,j),(m,i)} = \bs{w}_{n,j}^{H}\mb{R}_{n,m,i}\bs{w}_{n,j},~\forall m,n\in\mathcal{M},\forall i\in\mathcal{I}$. 

Furthermore, let $\bs{I}(\bs{p}) = [I_{1,1}(\bs{p}),I_{1,2}(\bs{p}),...,I_{N,K}(\bs{p})]^{T}$.
Since $\bs{p} \ge \bs{I}(\bs{p})$,
the solution of problem \eqref{reduced MP} can be computed with the following iterative method,
\begin{align}
	\label{power iteration method}
	\bs{p}(t+1) = \bs{I}(\bs{p}(t)).
\end{align}
Moreover, this iterative power control method can be implemented asynchronously in a distributed fashion as given in \cite{yates1995framework}.

\section{Asynchronous Distributed Beamforming and Power Control}
\label{sec:Distributed Beamforming and Power Control}
\begin{figure}
	\centering
		\includegraphics[width=0.45\textwidth]{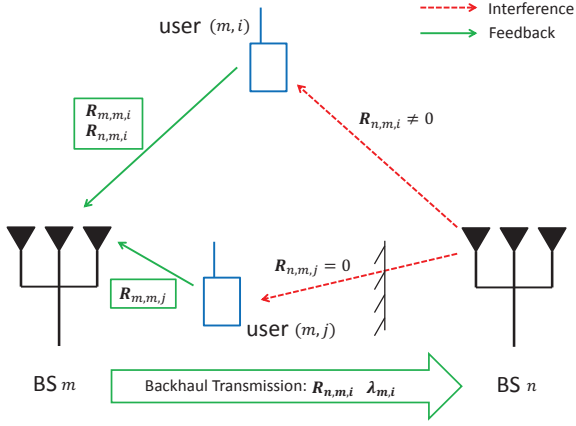}
	\caption{Illustration of information exchange among BSs}
	\label{fig:fig1}
\end{figure}

In this section, we introduce the asynchronous distributed beamforming and power control algorithm which is based on the above computation methods.
For ease of demonstrating the features of the algorithm, we describe the algorithm in the form of threads which are given in Table \ref{tab:TheDistributedBeamformingAndPowerControlAlgorithm}. The threads can work in parallel without strict coordination. Besides, with the proposed algorithm, the information exchange among BSs is illustrated in Fig. \ref{fig:fig1}.

\begin{table}[t]
	\centering
	\caption{The asynchronous distributed beamforming and power control algorithm}
	\rule{0.48\textwidth}{0.4pt}
	\begin{itemize}
		\item \textbf{CSI thread:}
		\begin{itemize}
			\item User $(m,i)$: \\
			estimate $\mb{R}_{m,m,i}$ and $\mb{R}_{n,m,i}$, if $\mb{R}_{n,m,i} \neq 0$;\\ 
			send them to BS $m$.
			\item BS $m$: \\
			receive $\mb{R}_{m,m,i}$ and $\mb{R}_{n,m,i}$, $\forall i\in\mathcal{I}$, from user $(m,i)$; \\
			send $\mb{R}_{n,m,i}$ ($\mb{R}_{n,m,i} \neq 0$) to BS $n$; receive $\mb{R}_{m,n,j}$ from BS $n$.
			\item The period of updating CSI: $T_{\t{CSI}}$.
		\end{itemize}
		\item \textbf{Dual computation (DC) thread:}
		\begin{itemize}
			\item BS $m$:\\
			receive $\lambda_{n,j}(t)$ from BS $n$ when $\mb{R}_{m,n,j}\neq 0$; \\
			update dual variable $\lambda_{m,i}(t+1)$ with $J_{m,i}\left( \bs{\lambda}( \bs{f}_{m,i}(t) ) \right)$, $\forall i\in\mathcal{I}$; \\
			send $\lambda_{m,i}(t+1)$ to BS $n$, if $\mb{R}_{n,m,i}\neq 0$ is reported by user $(m,i)$.
			\item The period between two iterations: $T_{\t{DC}}$.
		\end{itemize}
		\item \textbf{Beamforming (BF) thread:}
		\begin{itemize}
			\item BS $m$: \\
			fetch the CSI from the CSI thread and the dual variables from the DC thread; \\
			update $\bs{w}_{m,i}$ with equation \eqref{method for BF}.
			\item The period of beamforming: $T_{\t{BF}}$.
		\end{itemize}
		\item \textbf{Power control (PC) thread:}
		\begin{itemize}
			\item User $(m,i)$: \\
			update power $p_{m,i}(t+1)$ with $I_{m,i}(\bs{p}(t))$; \\
			feedback the demanded power to BS $m$.
			\item BS $m$:\\
			transmit to user $(m,i)$ with required power.			
			\item The period of power control: $T_{\t{PC}}$.
		\end{itemize}
	\end{itemize}	
	\rule{0.48\textwidth}{0.4pt}
	
	\label{tab:TheDistributedBeamformingAndPowerControlAlgorithm}
\end{table}

The CSI thread is used to collect the long-term CSI. Each user can estimate the long-term CSI, from its associated BS to it, and, from the neighboring BSs to it. This can be done through pilot channels. The users send the long-term CSI to their BSs. 
When a BS interferes with a user of another cell, or its user is interfered with by other BSs, it exchanges the long-term CSI with the related neighboring BSs.
Comparing to the algorithms using instantaneous CSI, the communication overheads here are relatively low. The larger $T_{\t{CSI}}$ is, the smaller the time average communication overhead is.
Moreover, we assume $T_{\t{CSI}} \gg \max\{ T_{\t{DC}},T_{\t{BF}} ,T_{\t{PC}}\}$, i.e., the long-term CSI can be viewed as fixed for the other threads.

The DC thread is the key to the asynchronous implementation of the algorithm.
Each BS runs a DC thread which updates the dual variables of its users.
Mathematically, the update of a dual variable needs all the dual variables. Indeed, in addition to its own dual variables, a BS only needs to know the dual variables of the users interfered by it. The BSs communicate via the backhaul network.
In practice, transmission delay and packet loss could happen, which means that some BSs use outdated information.
According to Theorem \ref{asyn iter dual comp methd convergence}, the BSs can still compute the optimal dual variables in this situation.
Furthermore, the BSs do not need to update the dual variables simultaneously,
nor with the same frequency.
As to the computational complexity, each BS needs to compute the generalized eigenvalues. 
If using the QZ method \cite{golub1996matrix}, the computational complexity of a BS per iteration is about $O(KN^{3})$.

Each BS runs a BF thread to compute the beamforming vectors of its users. 
With the beamforming computation method in \eqref{method for BF}, the update frequency of the BF thread does not need to coordinate with the DC thread. 
The needed dual variables in the computation are fetched from the DC thread.
The main computation load of the BF thread is to compute the eigenvalues and the null spaces of $K$ matrices with size $N\times N$.

The PC thread runs at each user. 
Each user computes the required power with equation \eqref{power iteration function} based on the local information:
the total received signal power and the power gain between it and its BS.
Each user sends the demanded power to its BS and the BS allocates the user the required power.
This kind of power control can be understood as ``power on demand'' \cite{lin2012distributed}.
Besides, this iterative power control method \eqref{power iteration function} falls in the iterative power control framework in \cite{yates1995framework}. Thus, the users can update the power asynchronously and distributively.
Besides, the computational complexity for each user is low. 

In summary, the DC thread is the key to the asynchronous implementation of the algorithm. As long as the optimal dual variables are achieved, the optimal beamforming and power control can be realized by the BSs and the users asynchronously.
In other words, if the iterative dual computation method could not converge under asynchronous implementation, the beamforming and power control algorithm could not converge asynchronously either.
With Theorem \ref{asyn iter dual comp methd convergence}, we conclude that the asynchronous distributed beamforming and power control algorithm can provide the optimal solution to the master problem.

\section{Simulation}
\label{sec:Simulation}

In this section, we demonstrate the performance of the proposed beamforming and power control algorithm via simulation.
The downlink channel model captures the path loss and the spatial correlation. Let $\mb{R}_{n,m,i} = \left(d_{n,m,i}\right)^{\chi}\bar{\mb{R}}_{n,m,i}$, $\forall m,n\in\mathcal{M},~\forall i\in\mathcal{I}$, where $d_{n,m,i}$ is the distance from BS $n$ to user $(m,i)$ and $\chi = - 3$.
According to \cite{bengtsson_optimal_1999}, the spatial correlation matrix is approximated by, $\forall m,n\in\mathcal{M},~\forall i\in\mathcal{I}$,
\begin{align}
	&[\bar{R}_{n,m,i}(\theta_{n,m,i},\sigma_{\theta})]_{k,l} \notag \\
	= & \text{e}^{j\pi(k-l)\sin\theta_{i}}\text{e}^{-\frac{\pi(k-l)\sigma_{\theta}\cos\theta_{n,m,i}}{2}},~k,l = 1,...,N, \notag
\end{align}
where $\theta_{n,m,i}$ means that user $(m,i)$ locates at $\theta_{n,m,i}$ relative to the array broadside of BS $n$.
Each user is surrounded by local scatters corresponding to a spread angle of $\sigma_{\theta} = 2^{\circ}$. 
Without loss of generality, the SINR targets of the users are assumed to be the same.
The other simulation settings are given as follows, if not specified.
Let $N = 4,~\sigma_{m,i}^{2} = 10^{-12},~\gamma_{m,i} = 0.1,~\forall (m,i)$.
The distance between two BSs is $2000$. The users are randomly located within their cells.


As pointed above, the iterative dual computation method is the key to the asynchronous implementation of the whole algorithm.
The asynchronous implementation here is to let each dual variable has a positive probability being not updated. The probability is denoted by $P$. 
This setup simulates the cases that the BSs do not update simultaneously and that packet loss or delay happens.


\begin{figure}[t]
\centering
	\subfigure[Asynchronous Implementation]{
	\centering
		\includegraphics[width=0.45\textwidth]{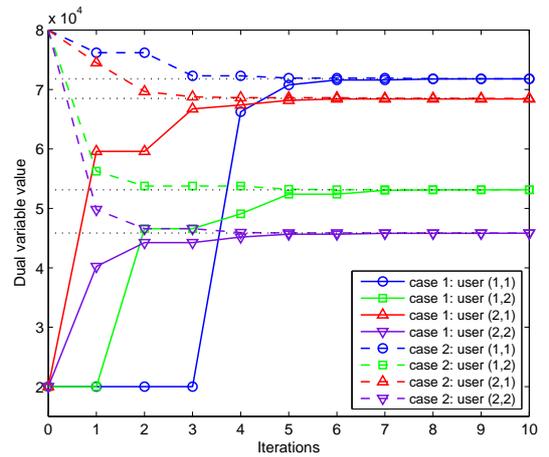}
		\label{fig:fig3}
	}
	\hspace{0.01\linewidth}
	\subfigure[Synchronous Implementation]{
		\centering
		\includegraphics[width=0.45\textwidth]{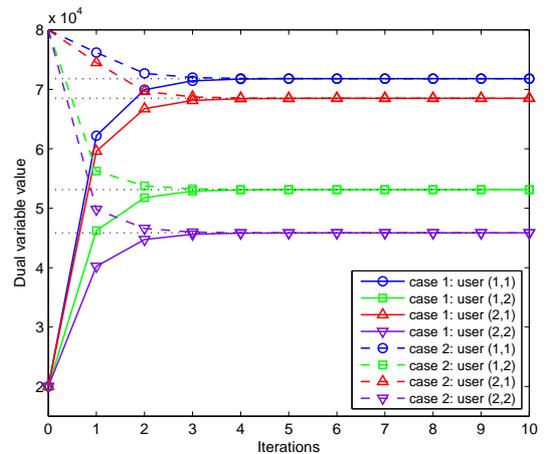}
		\label{fig:fig2}
	}
	\caption{The convergence of the iterative dual computation method under synchronous and asynchronous (P =0.5) implementations. Let $M =2,~K=2$. In case 1, the initial dual variables are 40; In case 2, the initial dual variables are 25. The black dot curves are the optimal solutions of the dual problem computed with CVX serving as benchmarks.}
\end{figure}

\begin{figure}[t]
		\centering
			\includegraphics[width=0.45\textwidth]{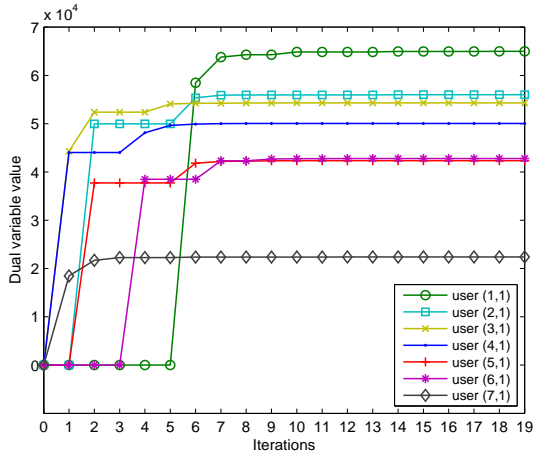}
		\caption{The convergence of the iterative dual computation method under asynchronous implementation. $P = 0.5,~M = 7,~K=2$. The seven cells are located according to the hexagonal model similar to \cite[Fig. 2]{cai2012maxmin}.}
		\label{fig:fig4}
\end{figure}

Fig. \ref{fig:fig3} shows the convergence of the iterative dual computation method under asynchronous implementation with $P=0.5$.
In case 1, we set $\bs{\lambda}(0)\le \bs{J}(\bs{\lambda}(0))$ resulting in increasing sequences. In case 2, we set $\bs{\lambda}(0)\ge \bs{J}(\bs{\lambda}(0))$ resulting in decreasing sequences.
To show the effect of the asynchronous implementation, the convergence of the iterative dual computation method under synchronous implementation is illustrated in Fig. \ref{fig:fig2}.  
Comparing Fig. \ref{fig:fig3} and Fig. \ref{fig:fig2}, 
we notice that the method under asynchronous implementation has a slower convergence speed. 
Besides, Fig. \ref{fig:fig4} shows the convergence of the iterative computation method with $M=7,~K=2,~P =0.5,$ where only the dual variables of each cell's first user are shown.

%

To further confirm this observation, we study the influence of the asynchronous implementation with $M=4,~K=4$. The four BSs are put at the four corners of a square.
Table \ref{tab:iterations} shows the average iteration numbers needed for the iterative dual computation method to converge under different values of $P$.
The convergence criteria is $\|\bs{\lambda}(t+1) - \bs{\lambda}(t)\| \le 10^{-5}$.
We can see that, as $P$ increases, the average number of iterations also increases. 
\begin{table}[t]
	\centering
		\begin{tabular}{|c|c|c|c|c|c|c|}
		\hline
		$P$ & 0 & 0.1 & 0.2 & 0.3 & 0.4 & 0.5 \\ 
		\hline
		Num. of Iter.	& 24 & 28 & 33 & 39 & 46 & 56 \\  
		\hline
		\end{tabular}
	\caption{Average numbers of convergence iterations versus failure probability $P$. $M = 4,~K=4$}
	\label{tab:iterations}
\end{table}


\begin{figure}[t]
	\centering
		\includegraphics[width=0.45\textwidth]{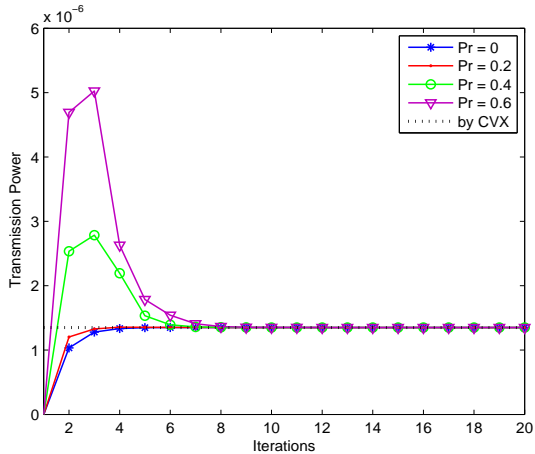}
	\caption{ Convergence of the asynchronous distributed beamforming and power control algorithm with different update failure probability. $M =4,~K=4$.}
	\label{fig:fig5}
\end{figure}

Next, we show the convergence of the asynchronous distributed beamforming and power control algorithm with $M=4,~K=4$. 
In \cite{bengtsson_optimal_2001}, the authors prove that the optimum of the SDP-relaxed master problem \eqref{master problem} is the same as the master problem as long as the master problem is feasible.
Based on this conclusion, we use CVX to compute the optimum of the SDP-relaxed master problem as a benchmark.
It is plotted as the black dot curve in Fig. \ref{fig:fig5}.
Furthermore, we assume $T_{\t{CSI}} \gg T_{\t{DC}} =T_{\t{BF}}=T_{\t{PC}}$. Specifically, the CSI is assumed to be fixed. In each iteration, the BSs update the beamforming vectors once and the users update the power allocations once. The dual variables are updated according to the probability $P$ in each iteration. 
Fig. \ref{fig:fig5} shows that the beamforming and power control algorithm converges to the benchmark under different $P$.
Besides, Fig. \ref{fig:fig5} shows that when $P$ is large, the total transmission power could have large fluctuations.
When $P=0$, the algorithm converges quickly.
This manifestation reminds us that in practice reliable backhaul networks are important for the performance of the algorithm.

\begin{figure}[t]
	\centering
		\includegraphics[width=0.45\textwidth]{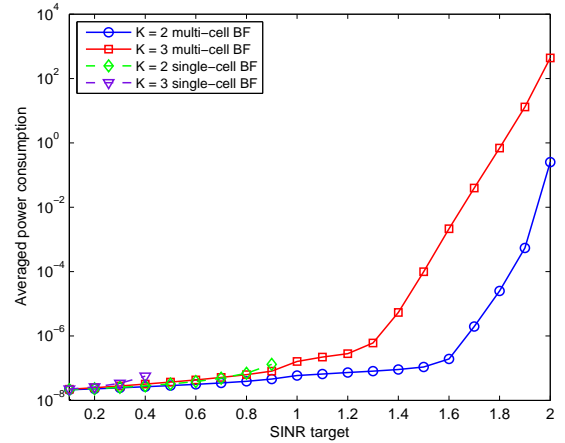}
	\caption{ Average power consumption versus different SINR targets. $M =4$}
	\label{fig:fig6}
\end{figure}

If a BS computes the beamforming vectors and power allocations treating the inter-cell interference as noise, we call this kind of beamforming the single-cell beamforming. The performance gain of multi-cell beamforming over single-cell beamforming is illustrated in Fig. \ref{fig:fig6}.
Each curve in Fig. \ref{fig:fig6} is plotted via $1000$-time averaging.
Only the setup, with which the master problem is feasible, is used.

In both the cases $M = 4,~K=2$ and $M=4,~K = 3$, the averaged power consumption of the multi-cell beamforming is less than that of the single-cell beamforming. 
Furthermore, the power consumption of the single-cell beamforming increases more quickly than the multi-cell beamforming. 
It is because the single-cell beamforming only coordinates intra-cell interference. When the SINR target increases, a BS needs to increase power to increase the SINR.
When all BSs do so, the inter-cell interference received by the users will increase. As a result, the BSs need to further increase power. However, when using multi-cell beamforming, inter-cell interference is coordinated. The BSs increase power mainly for increasing the users' SINRs.

Besides, the increase of the total power consumption implies that certain BS's power consumption must increase. In practice, the maximum transmission power of a BS is constrained.
Considering each BS's power constraint, the method used in \cite{yu_transmitter_2007,cai2012maxmin} can be used to find the minimum power consumption. Briefly, the power minimization problem can be decomposed into two levels by decomposing the BSs' power constraints. Our method here can be used to solve the lower level problem.

\section{Conclusion}
\label{sec:Conclusion}
In this paper, our main contribution is an asynchronous distributed beamforming and power control algorithm for the multi-cell networks. The algorithm is derived via investigating a multi-cell PMBP which assumes long-term CSI, non-reciprocal channel, and different noise variances at each user. 
Via exploring the special problem structure, we are able to propose a novel asynchronous iterative dual computation method to compute the dual of the multi-cell PMBP. 
Besides, beamforming and power control could be done within each cell by the BSs and the users asynchronously in a distributed fashion. At last, the convergence and performance of the algorithm are also demonstrated via simulation.

\appendices
\section{Proof of Proposition \ref{pro:feasibility}}
\label{proof of the feasibility pro}

We first introduce the definitions of Z-matrix, P-matrix, and K-matrix for the proof of this proposition.
\begin{defi}
(\cite[Definition 2]{jong_2010_design})
\romannumeral1)
	A matrix $\textbf{M} \in \mathbb{R}^{n\times n}$ is called Z-matrix if its off-diagonal entries are all non-positive;
\romannumeral2)
	A matrix $\textbf{M} \in \mathbb{R}^{n\times n}$ is called P-matrix if it reverses the sign of no nonzero vector, i.e., ${x}_{i}[\textbf{M}\boldsymbol{x}]_{i} \le 0 \Rightarrow \boldsymbol{x} = 0$;
\romannumeral3)
	If a matrix is both Z-matrix and P-matrix, it is called K-matrix.
\end{defi}

The K-matrix is related to the matrix radius according to the following lemma.
\begin{lm}
(\cite[Lemma 1]{jong_2010_design})
\label{K matrix and radius}
	Let $\textbf{M} \in \mathbb{R}^{n\times n}$ be a K-matrix, $\textbf{N} \in \mathbb{R}^{n\times n}$ be a non-negative matrix, then $\rho(\textbf{M}^{-1}\textbf{N})<1$ if and only if $\textbf{M} - \textbf{N}$ is a K-matrix.
\end{lm}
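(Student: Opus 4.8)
The plan is to recognize this statement as an instance of the classical theorem on \emph{regular splittings} of a monotone matrix, and to prove it by exploiting the M-matrix characterization of K-matrices together with the Perron--Frobenius theorem applied to the non-negative matrix $\mathbf{M}^{-1}\mathbf{N}$.

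First I would record two structural facts that set up the reduction. Since $\mathbf{M}$ is a Z-matrix and $\mathbf{N}\ge 0$ entrywise, the off-diagonal entries of $\mathbf{M}-\mathbf{N}$ remain non-positive, so $\mathbf{M}-\mathbf{N}$ is automatically a Z-matrix; hence ``$\mathbf{M}-\mathbf{N}$ is a K-matrix'' is equivalent to ``$\mathbf{M}-\mathbf{N}$ is a nonsingular M-matrix.'' I would then invoke the standard equivalence from M-matrix theory (e.g.\ \cite{horn1990matrix}) that a Z-matrix is a K-matrix if and only if it is nonsingular with an entrywise non-negative inverse. In particular, applying this to $\mathbf{M}$ itself gives $\mathbf{M}^{-1}\ge 0$, and therefore $\mathbf{P}:=\mathbf{M}^{-1}\mathbf{N}\ge 0$. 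This reduces the lemma to the equivalence $\rho(\mathbf{P})<1 \Leftrightarrow (\mathbf{M}-\mathbf{N})^{-1}\ge 0$, i.e.\ to Varga's regular-splitting theorem with $\mathbf{M}-\mathbf{N}$ as the split matrix.

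For the forward direction I would factor $\mathbf{M}-\mathbf{N}=\mathbf{M}(\mathbf{I}-\mathbf{P})$. If $\rho(\mathbf{P})<1$, then $\mathbf{I}-\mathbf{P}$ is invertible with convergent Neumann series $(\mathbf{I}-\mathbf{P})^{-1}=\sum_{k=0}^{\infty}\mathbf{P}^{k}$; each term is non-negative because $\mathbf{P}\ge 0$, so $(\mathbf{I}-\mathbf{P})^{-1}\ge 0$, whence $(\mathbf{M}-\mathbf{N})^{-1}=(\mathbf{I}-\mathbf{P})^{-1}\mathbf{M}^{-1}\ge 0$. Combined with the Z-matrix property this makes $\mathbf{M}-\mathbf{N}$ a K-matrix. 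For the reverse direction I would argue by contradiction via Perron--Frobenius: since $\mathbf{P}\ge 0$, there exists $\mathbf{z}\ge 0$, $\mathbf{z}\ne \mathbf{0}$, with $\mathbf{P}\mathbf{z}=\rho(\mathbf{P})\mathbf{z}$, i.e.\ $\mathbf{N}\mathbf{z}=\rho(\mathbf{P})\mathbf{M}\mathbf{z}$. Assuming $\rho(\mathbf{P})\ge 1$ (the case $\rho(\mathbf{P})=0$ being trivial), one computes $(\mathbf{M}-\mathbf{N})\mathbf{z}=(1-\rho(\mathbf{P}))\mathbf{M}\mathbf{z}$, where $\mathbf{M}\mathbf{z}=\rho(\mathbf{P})^{-1}\mathbf{N}\mathbf{z}\ge 0$ and $1-\rho(\mathbf{P})\le 0$, so $(\mathbf{M}-\mathbf{N})\mathbf{z}\le \mathbf{0}$. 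Applying $(\mathbf{M}-\mathbf{N})^{-1}\ge 0$ yields $\mathbf{z}\le \mathbf{0}$, which together with $\mathbf{z}\ge \mathbf{0}$, $\mathbf{z}\ne\mathbf{0}$ is a contradiction; hence $\rho(\mathbf{P})<1$.

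The step I expect to be the main obstacle is the reverse direction: one must extract the spectral radius through a genuinely non-negative Perron eigenvector (which is delicate when $\mathbf{P}$ is reducible), verify $\rho(\mathbf{P})>0$ so that the identity $\mathbf{M}\mathbf{z}=\rho(\mathbf{P})^{-1}\mathbf{N}\mathbf{z}\ge 0$ is legitimate, and chase signs carefully so that $\mathbf{z}\ge \mathbf{0}$ and $\mathbf{z}\le \mathbf{0}$ really force $\mathbf{z}=\mathbf{0}$. A secondary care point is to cite the correct equivalence among the many M-matrix characterizations, so that ``Z-matrix with non-negative inverse'' and ``K-matrix'' may be used interchangeably throughout.
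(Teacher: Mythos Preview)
Your argument is correct and is essentially the classical regular-splitting/M-matrix proof; the forward direction via the Neumann series and the reverse direction via a Perron eigenvector are both sound, and the care points you flag (non-negative Perron vector when $\mathbf{P}$ is reducible, the degenerate case $\rho(\mathbf{P})=0$, and the equivalence ``Z-matrix with non-negative inverse $\Leftrightarrow$ K-matrix'') are exactly the right ones.

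There is, however, nothing to compare against: the paper does not prove this lemma at all. It is stated with a citation to \cite[Lemma~1]{jong_2010_design} and then immediately applied in the proof of Proposition~\ref{pro:feasibility}. So your proposal supplies a self-contained proof where the paper simply imports the result. If anything, note that the paper's surrounding usage is the special case $\mathbf{M}=\mathbf{I}$, $\mathbf{N}=\mathbf{\Gamma G}$, for which your Neumann-series direction collapses to the elementary fact that $\rho(\mathbf{\Gamma G})<1$ makes $\mathbf{I}-\mathbf{\Gamma G}$ an M-matrix; the full K-matrix generality you prove is more than the paper actually needs.
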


According to Lemma \ref{feasibility condition}, we need to prove that the inequalities \eqref{feasibility conditions 2} is the necessary and sufficient condition for $\rho( \mb{\Gamma}\mb{G} )<1$.
Since $\rho( \mb{\Gamma}\mb{G} ) = \rho( \mb{I}^{-1}\mb{\Gamma}\mb{G} )<1$ where $\mb{I}$ is a K-matrix, we conclude that $\mb{I} - \mb{\Gamma}\mb{G}$ is a K-matrix according to Lemma \ref{K matrix and radius}.
 
Furthermore, $\mb{I} - \mb{\Gamma}\mb{G}$ is a P-matrix whose components are 
\begin{align}
	&[\mb{I} - \mb{\Gamma}\mb{G}]_{(m,i),(n,j)} \notag \\
	= &\left\{
	\begin{array}{l l}
	- \frac{\gamma_{m,i}}{ \bs{w}_{m,i}^{H}\mb{R}_{m,m,i}\bs{w}_{m,i} / \bs{w}_{n,j}^{H}\mb{R}_{n,m,i}\bs{w}_{n,j} }, & (m,i) \neq (n,j) \\
	1, &  (m,i) = (n,j).
	\end{array}
	\right.
\end{align}
According to the definition of P-matrix, we have
\begin{align}
\label{P matrix}
	p_{m,i} [(\mb{I} -\mb{\Gamma}\mb{G})\bs{p}]_{m,i}
	>0,~\forall (m,i),
\end{align}
where $[(\mb{I} -\mb{\Gamma}\mb{G})\bs{p}]_{m,i}$ is the component of vector $(\mb{I} -\mb{\Gamma}\mb{G})\bs{p}$ corresponding to user $(m,i)$.

If inequalities \eqref{P matrix} hold for all $p_{m,i}>0$, we obtain 
$[(\mb{I} -\mb{\Gamma}\mb{G})\bs{p}]_{m,i} > 0,~\forall (m,i),$ which are equivalent to \eqref{feasibility conditions 2}.

\section{Proof of Lemma \ref{minimum non-negative eigenvalue}}
\label{proof of min-eig theorem}

To prove this lemma, we resort to the theoretical results about the \emph{semidefiniteness of a matrix pencil} \cite{golub1996matrix,kovavc1995trace,Liang20133085}.
We first introduce some necessary terminologies about matrix pencils, i.e., the following two definitions, which can be found in \cite[pp. 375-377]{golub1996matrix}.
The eigenvalues of a matrix pencil are defined as follows.
\begin{defi}(\emph{Eigenvalue and Eigenvector of a Matrix Pencil})
The set of all the form $\mb{A} - \mu\mb{B}$ with $\mu\in\mathbb{C}$ is said to be a \emph{pencil}. The eigenvalues of the pencil are elements of the set $\mu(\mb{A},\mb{B})$ defined by
$$
	\mu(\mb{A},\mb{B}) = \{ z \in \mathbb{C} : \text{det}( \mb{A} - z \mb{B} ) = 0\}.
	$$
If $\mu\in\mu(\mb{A},\mb{B})$ and 
$$
		\mb{A} \bs{x} = \mu\mb{B}\bs{x},~\bs{x}\neq 0,
$$
then $\bs{x}$ is referred to as an eigenvector of $\mb{A} - \mu\mb{B}$.
\end{defi}

\begin{defi}(\emph{Generalized Schur Decomposition})
\label{gener schur}
If $\mb{A}$ and $\mb{B}$ are in $\mathbb{C}^{n\times n}$, then there exist unitary $\mb{Q}$ and $\mb{Z}$ such that $\mb{Q}^{H}\mb{A}\mb{Z} = \mb{T}$ and $\mb{Q}^{H}\mb{B}\mb{Z} = \mb{S}$ are upper triangular. If for some $k$, $t_{kk}$ and $s_{kk}$ are both zero, then $\mu\in\mu(\mb{A},\mb{B}) = \mathbb{C}$. Otherwise 
$$
	\mu\in\mu(\mb{A},\mb{B}) = \{ \frac{t_{ii}}{s_{ii}} : s_{ii}\neq 0
	\}.
$$
\end{defi}

With the above definitions, we start the formal proof of Lemma \ref{minimum non-negative eigenvalue}.
More strict proof of similar results in the general case can be found in the study about semidefiniteness intervals of matrix pencil \cite[Corollary 3.7]{kovavc1995trace},\cite[Theorem 2.1]{Liang20133085}.
Our proof is given as follows.

\subsection{Proof of Existence and Finiteness of $\mu_{+}(\mb{A} , \mb{B}))$:}


According to \cite{stewart1979pertubation}, we have $$\mu_{+}(\mb{A},\mb{B}) = \underset{\bs{x}^{H}\mb{B}\bs{x} = 1}{\t{min}}~~\bs{x}^{H}\mb{A}\bs{x}.$$
Since $\mb{A}\succ 0,~\mb{B}\succeq 0$ by assumption, the minimum of $\underset{\bs{x}^{H}\mb{B}\bs{x} = 1}{\t{min}}~~\bs{x}^{H}\mb{A}\bs{x}$ exists and is finite due to convexity.

\subsection{Proof of $\mb{A} - \mu\mb{B} \succeq 0 \Rightarrow \mu \le \mu_{+}(\mb{A} , \mb{B})$:}


Since $\mb{A} - \mu\mb{B}$ is positive semidefinite, we have
\begin{align}
\label{xAx+xBx}
	\bs{x}^{H}\mb{A}\bs{x} - \mu\bs{x}^{H}\mb{B}\bs{x} \ge 0,~\forall \bs{x}\neq 0
\end{align}

Since $\mb{A} \succ 0,~\mb{B}\succeq 0$, we have $\bs{x}^{H}\mb{A}\bs{x} > 0$ and $\bs{x}^{H}\mb{B}\bs{x} \ge 0$.

If $\bs{x}\in\{ \bs{z}:\bs{z}^{H}\mb{B}\bs{z} = 0,~\bs{z} \neq 0 \}$, then $\mu \ge 0$ by assumption. 

If $\bs{x}\in\{ \bs{z}:\bs{z}^{H}\mb{B}\bs{z} > 0,~\bs{z} \neq 0 \}$, then $\mu \le \frac{\bs{x}^{H}\mb{A}\bs{x}}{\bs{x}^{H}\mb{B}\bs{x}}$ based on \eqref{xAx+xBx}.

Furthermore, $\mu \le \frac{\bs{x}^{H}\mb{A}\bs{x}}{\bs{x}^{H}\mb{B}\bs{x}}$ is equivalent to 
$$
\mu \le \underset{\bs{x}}{\t{min}}\left\{  \frac{\bs{x}^{H}\mb{A}\bs{x}}{\bs{x}^{H}\mb{B}\bs{x}} \right\}.
$$

The value of $\underset{\bs{x}}{\t{min}}\left\{\frac{\bs{x}^{H}\mb{A}\bs{x}}{\bs{x}^{H}\mb{B}\bs{x}}\right\}$ is equivalent to the optimal value of the following problem 
\begin{align}
\label{minimization problem for eigenvalue}
	\underset{\bs{x}}{\t{min}}~~\bs{x}^{H}\mb{A}\bs{x}~~~~\text{s.t.}~~\bs{x}^{H}\mb{B}\bs{x} = 1,
\end{align}
because we can always scale $\bs{x}$ such that $\bs{x}^{H}\mb{B}\bs{x} = 1$.
According to \cite{stewart1979pertubation}, the minimizing vector is the eigenvector of matrix pencil $\mb{A} - \mu\mb{B}$, whereas the minimum is the corresponding eigenvalue. Thus, we can conclude that $\mu\le \mu_{+}(\mb{A},\mb{B})$, i.e., being smaller or equal to the minimum non-negative eigenvalue of matrix pencil $\mb{A} - \mu\mb{B}$.

\subsection{Proof of $\mb{A} - \mu\mb{B} \succeq 0 \Leftarrow \mu \le \mu_{+}(\mb{A} , \mb{B})$:}


This proof is based on the determinant of the matrix pencil $\mb{A} - \mu\mb{B}$. 

Let $\beta_{1}(\mu)\ge\beta_{2}(\mu)\ge,...,\ge\beta_{N}(\mu)$ be the eigenvalues of matrix $\mb{A} - \mu \mb{B}$ given $\mu$. We have 
\begin{align}
\label{det eigen}
	\text{det}( \mb{A} - \mu \mb{B} ) 
	= \prod_{k=1,...,N}\beta_{k}(\mu).
\end{align}
In addition, we have the following lemma.
\begin{lm}
	\label{continuity of eigenvalue}
	Eigenvalues $\beta_{k}(\mu),~k=1,...,N$, are continuous functions of $\mu$.
\end{lm}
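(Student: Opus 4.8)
The plan is to show that each ordered eigenvalue $\beta_k(\mu)$ is in fact Lipschitz continuous in $\mu$, which delivers continuity immediately. First I would record the basic structural fact: since $\mb{A}$ and $\mb{B}$ are Hermitian, the matrix $\mb{A} - \mu\mb{B}$ is Hermitian for every real $\mu$, so its $N$ eigenvalues are real and may be arranged as $\beta_1(\mu)\ge\cdots\ge\beta_N(\mu)$. Thus each $\beta_k(\mu)$ is a well-defined real-valued function of $\mu$, and the claim is meaningful.

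The core tool I would use is Weyl's perturbation inequality for Hermitian matrices: if $\mb{X}$ and $\mb{X}+\mb{E}$ are Hermitian, then their $k$th largest eigenvalues satisfy $|\beta_k(\mb{X}+\mb{E}) - \beta_k(\mb{X})| \le \|\mb{E}\|$, where $\|\cdot\|$ denotes the spectral norm. Taking $\mb{X} = \mb{A} - \mu\mb{B}$ and $\mb{X}+\mb{E} = \mb{A} - \mu'\mb{B}$ gives $\mb{E} = (\mu-\mu')\mb{B}$; since $\mb{B}\succeq 0$ its spectral norm equals its largest eigenvalue $\rho(\mb{B})$, so $\|\mb{E}\| = |\mu-\mu'|\,\rho(\mb{B})$. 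Combining these yields
$$
|\beta_k(\mu) - \beta_k(\mu')| \le \rho(\mb{B})\,|\mu - \mu'|,\quad k = 1,\ldots,N,
$$
which is exactly Lipschitz continuity, and in particular continuity, of each $\beta_k(\cdot)$.

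If a self-contained argument is preferred over citing Weyl's inequality, I would instead invoke the Courant--Fischer min-max characterization, writing $\beta_k(\mu)$ as a maximum over $k$-dimensional subspaces $S$ of $\min_{\bs{x}\in S,\,\|\bs{x}\|=1}\bs{x}^{H}(\mb{A}-\mu\mb{B})\bs{x}$. For each fixed unit vector $\bs{x}$ the scalar $\bs{x}^{H}(\mb{A}-\mu\mb{B})\bs{x}$ is affine in $\mu$ with $|\bs{x}^{H}\mb{B}\bs{x}| \le \rho(\mb{B})$, so the two families of functions of $\bs{x}$ indexed by $\mu$ and $\mu'$ differ by at most $\rho(\mb{B})\,|\mu-\mu'|$ uniformly over the compact unit sphere. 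Since successive $\min$ and $\max$ operations preserve such a uniform bound, the same Lipschitz estimate follows, making the argument elementary.

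I do not anticipate a genuine obstacle here. The one point demanding care is that the claim concerns the \emph{individually ordered} functions $\beta_k(\cdot)$, not merely the eigenvalues as an unordered set: eigenvalues may cross as $\mu$ varies, so a naive appeal to continuity of the roots of the characteristic polynomial $\det(\mb{A}-\mu\mb{B}-\beta\mb{I})$ would not immediately give continuity of each labelled branch. The Weyl/min-max estimate sidesteps this entirely, since its symmetric bound on the $k$th largest eigenvalue holds regardless of any coincidences or crossings among the $\beta_k(\mu)$.
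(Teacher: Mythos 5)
Your proof is correct and takes essentially the same route as the paper: both arguments view the change in $\mu$ as an additive Hermitian perturbation $(\mu'-\mu)\mathbf{B}$ of $\mathbf{A}-\mu\mathbf{B}$ and invoke a classical ordered-eigenvalue perturbation bound, the paper using the Hoffman--Wielandt inequality $\sum_{k}\left(\beta_{k}(\mu+\epsilon)-\beta_{k}(\mu)\right)^{2}\le \epsilon^{2}\|\mathbf{B}\|_{\mathrm{F}}^{2}$ where you use Weyl's inequality in the spectral norm. Your variant is an inessential refinement of the same argument, yielding the explicit Lipschitz constant $\rho(\mathbf{B})$ rather than $\|\mathbf{B}\|_{\mathrm{F}}$, and your remark about eigenvalue crossings is exactly the point that makes the perturbation-bound approach (in either form) preferable to tracking roots of the characteristic polynomial.
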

\begin{proof}
	Let $\mu_{0}\in (0,\mu_{+}(\mathbf{A},\mathbf{B}))$. To prove the continuity is to prove
$$
\underset{\epsilon \to 0}{\lim} \beta_{k}(\mu_{0}+\epsilon) = \underset{\mu \to \mu_{0}}{\lim}\beta_{k}(\mu) = \beta_{k}(\mu_{0}),~\forall k.
$$
Let $\lambda_{k}(\mathbf{A})$ represents the $k$th eigenvalue of matrix $\mathbf{A}$, assuming $\lambda_{k}(\mathbf{A}) \ge \lambda_{k+1}(\mathbf{A})$. We have 
$$
\beta_{k}(\mu_{0}+\epsilon) = \lambda_{k}\left( \mathbf{A} - (\mu_{0}+\epsilon) \mathbf{B}   \right) = \lambda_{k}\left( \mathbf{A} - \mu_{0}\mathbf{B} - \epsilon\mathbf{B} \right),~\forall k.
$$ 
The matrix $\epsilon\mathbf{B}$ can be viewed as a perturbation to matrix $\mathbf{A} - \mu_{0}\mathbf{B}$. 

Next, we resort to the \emph{matrix perturbation theory}. Since $\mathbf{A},\mathbf{B}$ are symmetric matrices, according to the Hoffman-Wielandt Theorem \cite[pp. 189]{stewart1990matrix}, we have
$$
\sum_{k} \left( \lambda_{k}\left( \mathbf{A} - \mu_{0}\mathbf{B} - \epsilon\mathbf{B}\right)  - \lambda_{k}(\mathbf{A} - \mu_{0}\mathbf{B})  
  \right)^{2} \le \epsilon^{2}|| \mathbf{B}||_{\text{F}}^{2},
$$
where $||\cdot||_{\text{F}}$ represents Frobenius matrix norm.
We can see that 
$$\left( \lambda_{k}\left( \mathbf{A} - \mu_{0}\mathbf{B} - \epsilon\mathbf{B}\right)  - \lambda_{k}(\mathbf{A} - \mu_{0}\mathbf{B})  
  \right)^{2} \ge 0,~\forall k.$$
In addition, since $|| \mathbf{B}||_{\text{F}}^{2}$ is finite, $\underset{\epsilon \to 0}{\lim}\epsilon^{2}|| \mathbf{B}||_{\text{F}}^{2} = 0$.
Therefore, we can conclude that 
$$\underset{\epsilon \to 0}{\lim} \left( \lambda_{k}\left( \mathbf{A} - \mu_{0}\mathbf{B} - \epsilon\mathbf{B}\right)  - \lambda_{k}(\mathbf{A} - \mu_{0}\mathbf{B})  
  \right)^{2} = 0,~\forall k.  $$
That is, $\underset{\epsilon \to 0}{\lim} \beta_{k}(\mu_{0}+\epsilon) =  \beta_{k}(\mu_{0}),~\forall k.$
This completes the proof. 
\end{proof}

Moreover, based on Definition \ref{gener schur}, we have 
\begin{align}
	\text{det}( \mb{A} - \mu \mb{B} ) &= \text{det}( \mb{T} - \mu \mb{S} )\text{det}(\mb{Q})\text{det}(\mb{Z}) \notag \\
	&= \prod_{k = 1,...,N} (t_{kk} - \mu s_{kk}). \notag 
\end{align}
With \eqref{det eigen}, we have 
\begin{align}
\label{det eigen and schur}
	\text{det}( \mb{A} - \mu \mb{B} ) = \prod_{k = 1,...,N} (t_{kk} - \mu s_{kk}) 
	= \prod_{k=1,...,N}\beta_{k}(\mu).
\end{align}

It is easy to see that if $\mu = 0$, $\text{det}( \mb{A} - \mu \mb{B} ) > 0$ and all the eigenvalues are positive.
Let $\mu$ increase continuously from $0$. From the polynomial in \eqref{det eigen and schur}, we can see that the determinant keeps to be positive until $\mu = \mu_{+}(\mb{A},\mb{B})$ where $\text{det}( \mb{A} - \mu \mb{B} ) = 0$.
With Lemma \ref{continuity of eigenvalue}, it implies that no $\beta_{k}(\mu)$ changes sign during the process. So, we can conclude that if $0\le \mu \le \mu_{+}(\mb{A} , \mb{B})$, then $\mb{A} - \mu\mb{B} \succeq 0$.

\section{Proof of Theorem \ref{dual convergence}}
\label{proof of the dual convergence}
The proof of this theorem is motivated by the work of \cite{yates1995framework}.
We need to prove $\bs{J}(\bs{\lambda})$ is a standard function as defined in \cite{yates1995framework}. That is, for all
$\bs{\lambda}\ge 0$, the following properties are satisfied:
\begin{itemize}
	\item Positivity: If $\bs{\lambda}\ge 0$, $\bs{J}(\bs{\lambda}) \ge 0$,
	\item Monotonicity: If $\bs{\lambda} \ge \bs{\lambda}^{\prime} $, $\bs{J}(\bs{\lambda})\ge \bs{J}(\bs{\lambda}^{\prime})$,
	\item Scalability: $\alpha \bs{J}(\bs{\lambda})>\bs{J}(\alpha\bs{\lambda}),~\forall \alpha >1$.
\end{itemize}
If $\bs{J}(\bs{\lambda})$ is a standard function, then the remaining proof is identical to the proof in \cite{yates1995framework}. Our contribution here is to prove that the specific function $\bs{J}(\bs{\lambda})$ is standard.

\subsection{Proof of Positivity:} 

The dual variables $\bs{\lambda}$ are greater than or equal to zero trivially.
Since $J_{m,i}(\bs{\lambda}),~\forall (m,i)$, are the non-negative eigenvalues, we have $\bs{J}(\bs{\lambda}) \ge 0$.

\subsection{Proof of Monotonicity:} 

To prove the monotonicity, we first propose the following lemma,
\begin{lm}
\label{lm:monotonicity}

Let $N\times N$ Hermitian matrices $\mb{B},\mb{C},\mb{D} \succeq 0$. If $\beta^{\prime}\ge \beta \ge 0$, then 
\begin{align}
	\label{eq:monotonicity}
	\mu_{+}(\mb{C} + \beta^{\prime}\mb{D},\mb{B}) \ge \mu_{+}(\mb{C} + \beta\mb{D},\mb{B}),
\end{align}

\end{lm}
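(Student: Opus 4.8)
The plan is to reduce the statement to the semidefiniteness characterization already established in Lemma \ref{minimum non-negative eigenvalue}, so that the claimed monotonicity becomes essentially a one-line statement about the Loewner (positive semidefinite) order. The key observation is that raising the coefficient from $\beta$ to $\beta^{\prime}$ only adds a positive semidefinite matrix, namely $(\beta^{\prime}-\beta)\mb{D}$, to the first argument of the pencil, and positive semidefiniteness is preserved under the addition of such a term. No spectral computation is needed; the ordering of the minimum non-negative eigenvalues follows purely from the feasibility equivalence in Lemma \ref{minimum non-negative eigenvalue}.

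Concretely, first I would set $\mu = \mu_{+}(\mb{C} + \beta\mb{D},\mb{B})$. Since $\mb{C} + \beta\mb{D}\succ 0$ and $\mb{B}\succeq 0$, Lemma \ref{minimum non-negative eigenvalue} applies to this pencil, and as $\mu \le \mu_{+}(\mb{C} + \beta\mb{D},\mb{B})$ holds (with equality), the implication $\mu\le\mu_{+}\Rightarrow \mb{A}-\mu\mb{B}\succeq 0$ gives
$$
\mb{C} + \beta\mb{D} - \mu\mb{B} \succeq 0.
$$
The second step adds the nonnegative contribution from the larger coefficient: because $\beta^{\prime} \ge \beta$ and $\mb{D}\succeq 0$, the matrix $(\beta^{\prime} - \beta)\mb{D}$ is positive semidefinite, so
$$
\mb{C} + \beta^{\prime}\mb{D} - \mu\mb{B} = (\mb{C} + \beta\mb{D} - \mu\mb{B}) + (\beta^{\prime} - \beta)\mb{D} \succeq 0,
$$
being a sum of two positive semidefinite matrices. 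The final step is to invoke Lemma \ref{minimum non-negative eigenvalue} once more, now in the direction $\mb{A}-\mu\mb{B}\succeq 0\Rightarrow \mu\le\mu_{+}(\mb{A},\mb{B})$, applied to the pencil with first matrix $\mb{C} + \beta^{\prime}\mb{D}$: the last display yields $\mu \le \mu_{+}(\mb{C} + \beta^{\prime}\mb{D},\mb{B})$, and substituting back $\mu = \mu_{+}(\mb{C} + \beta\mb{D},\mb{B})$ gives exactly \eqref{eq:monotonicity}.

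The only point requiring care is the applicability of Lemma \ref{minimum non-negative eigenvalue}, which needs the first matrix of each pencil to be positive definite rather than merely positive semidefinite; this holds in our setting because the first argument always inherits a strictly positive definite term (in the application $\mb{C}\succ 0$ via the identity block), while $\beta\mb{D}$ and $\beta^{\prime}\mb{D}$ only add positive semidefinite mass. An alternative and equally short route would bypass this point entirely by using the variational identity $\mu_{+}(\mb{A},\mb{B}) = \min_{\bs{x}^{H}\mb{B}\bs{x} = 1}\bs{x}^{H}\mb{A}\bs{x}$ from the proof of Lemma \ref{minimum non-negative eigenvalue}: then $\mu_{+}(\mb{C} + \beta\mb{D},\mb{B})$ is the minimum, over a feasible set independent of $\beta$, of the objective $\bs{x}^{H}\mb{C}\bs{x} + \beta\,\bs{x}^{H}\mb{D}\bs{x}$, which is pointwise nondecreasing in $\beta$ because $\bs{x}^{H}\mb{D}\bs{x}\ge 0$; taking the minimum over the common feasible set preserves the inequality. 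Either way the argument is a short monotonicity step with no genuinely difficult calculation.
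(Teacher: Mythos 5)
Your proposal is correct, but your primary route is genuinely different from the paper's. The paper proves Lemma \ref{lm:monotonicity} variationally: it takes the minimizing (generalized) eigenvectors $\bs{v}$ and $\bs{u}$ of the two pencils, writes $\mu_{+}(\mb{C}+\beta^{\prime}\mb{D},\mb{B}) = \bs{u}^{H}(\mb{C}+\beta\mb{D})\bs{u} + (\beta^{\prime}-\beta)\bs{u}^{H}\mb{D}\bs{u}$, lower-bounds the first term by the optimality of $\bs{v}$ in problem \eqref{minimization problem for eigenvalue} and the second term by $\mb{D}\succeq 0$ --- which is precisely the ``alternative route'' you sketch at the end via $\mu_{+}(\mb{A},\mb{B}) = \min_{\bs{x}^{H}\mb{B}\bs{x}=1}\bs{x}^{H}\mb{A}\bs{x}$. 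Your main argument instead sandwiches through the semidefiniteness equivalence of Lemma \ref{minimum non-negative eigenvalue}: $\mb{C}+\beta\mb{D}-\mu\mb{B}\succeq 0$ at $\mu = \mu_{+}(\mb{C}+\beta\mb{D},\mb{B})$, adding the positive semidefinite matrix $(\beta^{\prime}-\beta)\mb{D}$ preserves the Loewner order, and the converse direction of the equivalence returns $\mu\le\mu_{+}(\mb{C}+\beta^{\prime}\mb{D},\mb{B})$. This is a clean, non-circular structural reuse (Lemma \ref{minimum non-negative eigenvalue} is established independently in Appendix \ref{proof of min-eig theorem}), but it applies only where that lemma does, i.e., when the pencil's first matrix is positive definite --- a hypothesis that Lemma \ref{lm:monotonicity} as stated (with $\mb{C}\succeq 0$, $\beta\ge 0$) does not guarantee. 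You flag this correctly; note that in the paper's two invocations definiteness holds for slightly different reasons: in the monotonicity step $\mb{C}$ itself contains the identity, while in the scalability step the identity sits in $\mb{D} = \bigl(1+\frac{1}{\gamma_{m,i}}\bigr)^{-1}\mb{I}\succ 0$ and the comparison is between $\beta = 1$ and $\beta^{\prime} = \alpha$, so the first matrix of each pencil is still positive definite. One further thing the paper's variational proof buys that your LMI sandwich does not: the strict version. The scalability argument for Theorem \ref{dual convergence} explicitly recalls from the proof of Lemma \ref{lm:monotonicity} that $\mb{D}\succ 0$ and $\beta^{\prime} > \beta$ force $(\beta^{\prime}-\beta)\bs{u}^{H}\mb{D}\bs{u} > 0$, hence strict inequality in \eqref{eq:monotonicity}; the semidefiniteness characterization alone yields only the non-strict bound, so your main route would need additional work there, whereas your alternative variational route delivers strictness immediately.
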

\begin{proof}
Let $\mu_{+} = \mu_{+}(\mb{C} + \beta\mb{D},\mb{B})$ and $\bs{v}$ be the corresponding generalized eigenvector.
Similarly, let $\mu_{+}^{\prime} = \mu_{+}(\mb{C} + \beta^{\prime}\mb{D},\mb{B})$ and $\bs{u}$ be the corresponding generalized eigenvector.

From equation \eqref{minimization problem for eigenvalue}, we have
$$
\mu_{+} = \bs{v}^{H}(\mb{C} + \beta\mb{D})\bs{v},
$$
while 
$$	\mu_{+}^{\prime} = \bs{u}^{H}(\mb{C} + \beta^{\prime}\mb{D})\bs{u} 
	= \bs{u}^{H}(\mb{C} + \beta\mb{D})\bs{u} + (\beta^{\prime} - \beta)\bs{u}^{H}\mb{D}\bs{u}.$$

Since $\bs{v}$ minimize $\bs{x}^{H}(\mb{C} + \beta\mb{D})\bs{x}$ subject to $\bs{x}^{H}\mb{B}\bs{x} = 1$, we have
$$
	\bs{u}^{H}(\mb{C} + \beta\mb{D})\bs{u} \ge \bs{v}^{H}(\mb{C} + \beta\mb{D})\bs{v}.
$$

Furthermore, since $\beta^{\prime}\ge \beta$ and $\mb{D}\succeq 0$, we have 
$$(\beta^{\prime} - \beta)\bs{u}^{H}\mb{D}\bs{u} \ge 0.$$
Then, we can conclude that $\mu_{+}^{\prime} \ge \mu_{+}$.
\end{proof}

Suppose that, from $\bs{\lambda}$ to $\bs{\lambda}^{\prime}$, only $\lambda_{n,j}$ changes to $\lambda_{n,j}^{\prime}$ while the other entries remain unchanged.

Let
\begin{align}
	\mb{B} &= \mb{R}_{m,m,i}, \notag \\
	\mb{C} &= \left(1+\frac{1}{\gamma_{m,i}}\right)^{-1}\left( \mb{I}+ \sum_{(l,i)\neq(n,j)}\lambda_{l,i}\mb{R}_{m,l,i}\right), \notag \\
	\mb{D} &= \left(1+\frac{1}{\gamma_{m,i}}\right)^{-1}\mb{R}_{m,n,j}.
\end{align}

According to Lemma \ref{lm:monotonicity}, if $\lambda_{n,j}^{\prime} \ge \lambda_{n,j}$, then
$$
\mu_{+}(\mb{C} + \lambda_{n,j}^{\prime}\mb{D},\mb{B}) \ge \mu_{+}(\mb{C} + \lambda_{n,j}\mb{D},\mb{B}),
$$
which means $J_{m,i}(\bs{\lambda}^{\prime}) \ge J_{m,i}(\bs{\lambda}),~\forall (m,i)$.

If multiple entries increase from $\bs{\lambda}$ to $\bs{\lambda}^{\prime}$, it is equivalent to let them increase one by one.
Consequently, if $\bs{\lambda} \ge \bs{\lambda}^{\prime} $, $\bs{J}(\bs{\lambda})\ge \bs{J}(\bs{\lambda}^{\prime})$ according to Lemma \ref{lm:monotonicity}.

\subsection{Proof of Scalability:}

Below, we prove that $\alpha J_{m,i}(\bs{\lambda}) > J_{m,i}(\alpha \bs{\lambda}),~\forall (m,i),$ with $\alpha>1$.

Let 
\begin{align}
	\mb{B} &= \mb{R}_{m,m,i}, \notag \\
	\mb{C} &= \left(1+\frac{1}{\gamma_{m,i}}\right)^{-1}\sum_{n,j}\alpha\lambda_{n,j}\mb{R}_{m,n,j}, \notag \\
	\mb{D} &= \left(1+\frac{1}{\gamma_{m,i}}\right)^{-1}\mb{I}.
\end{align}

Then, we have
$$
J_{m,i}(\alpha \bs{\lambda}) = \mu_{+}(\mb{C} + \mb{D},\mb{B}),
$$
and 
$$
 \alpha J_{m,i}(\bs{\lambda}) = \mu_{+}(\mb{C} + \alpha\mb{D},\mb{B})
$$
becuase of 
\begin{align}
		&\mu_{+}(\mathbf{C}+\alpha \mathbf{D},\mathbf{B}) =\notag \\
		&
	\underset{\boldsymbol{x}^{H}\mathbf{R}_{m,m,i}\boldsymbol{x} = 1}{\text{min}}
		\boldsymbol{x}^{H} \Big(1+\frac{1}{\gamma_{m,i}}\Big)^{-1}
		\Big(
		\alpha \mathbf{I}+ \sum_{n,j}\alpha \lambda_{n,j}\mathbf{R}_{m,n,j}
		\Big) \boldsymbol{x} \notag \\
		&= \alpha J_{m,i}(\bs{\lambda}). \notag 
\end{align}

Recall that, in the proof of Lemma \ref{lm:monotonicity}, if $\mb{D} \succ 0$, $(\beta^{\prime} - \beta)\bs{u}^{H}\mb{D}\bs{u} > 0$, then
$$
\mu_{+}(\mb{C} + \beta^{\prime}\mb{D},\mb{B}) > \mu_{+}(\mb{C} + \beta\mb{D},\mb{B}).$$ 
Since $\alpha > 1$, according to Lemma \ref{lm:monotonicity}, we can conclude that $\alpha \bs{J}(\bs{\lambda}) > \bs{J}(\alpha\bs{\lambda}),~\forall \alpha >1$.

With the above results, the theorem can be proved by a line of reasoning similar to that in \cite{yates1995framework}.
The sketch of the remaining proof is as follows.
First prove that the uniqueness of the fixed point.
If $\bs{\lambda}(0) \ge \bs{J}(\bs{\lambda}(0)) $, the sequences are decreasing.
If $\bs{\lambda}(0) \le \bs{J}(\bs{\lambda}(0))  $, the sequences are increasing.
The sequences converge to the unique fixed point.

\bibliographystyle{IEEEtran}
\bibliography{beamforming}

\end{document}